\algnewcommand\algorithmicforeach{\textbf{for each}}
\algnewcommand{\IIf}[1]{\State\algorithmicif\ #1\ \algorithmicthen}
\algnewcommand{\ElseIIf}[1]{\State \textbf{else} \algorithmicif\ #1\ \algorithmicthen}
\algnewcommand{\ITE}[3]{\State\algorithmicif\ #1\ \algorithmicthen\ #2\ \algorithmicelse\ #3\ \algorithmicend\ \algorithmicif}
\algnewcommand{\EndIIf}{\unskip\ \algorithmicend\ \algorithmicif}
\algnewcommand{\IFor}[2]{\State\algorithmicfor\ #1\ \algorithmicdo \ #2\ \algorithmicend\ \algorithmicfor}
\newcommand{\np}[1]{\numprint{#1}}
\renewcommand{\innerproduct}[2]{\left\langle #1\mid#2 \right\rangle}
\newcommand{\ctimes}{\cdot}
\newcommand{\vqa}[2]{\ensuremath{#1; #2}}
\theoremstyle{plain}
\newtheorem{theorem}{Theorem}
\newtheorem{corollary}[theorem]{Corollary}
\newtheorem{lemma}[theorem]{Lemma}
\newtheorem{proposition}[theorem]{Proposition}
\theoremstyle{definition}
\newtheorem{definition}[theorem]{Definition}
\theoremstyle{definition}
\newtheorem{example}{Example}
\colorlet{cRed}{red!100!}
\begin{document}

\title{An abstract structure determines the contextuality degree of observable-based Kochen-Specker proofs\footnote{Author version
of~\cite{mg25}.}}

\author{Axel Muller}
\author{Alain Giorgetti\footnote{Corresponding author, \texttt{alain.giorgetti@femto-st.fr}}}

\affil{Université Marie et Louis Pasteur, CNRS, institut FEMTO-ST,\\
 F-25000 Besançon, France}

\date{}

\maketitle

\begin{abstract}
This article delves into the concept of quantum contextuality, specifically focusing on proofs of the Kochen-Specker theorem obtained by assigning Pauli observables to hypergraph vertices satisfying a given commutation relation. The abstract structure composed of this hypergraph and the graph of anticommutations is named a hypergram. Its labelings with Pauli observables generalize the well-known magic sets. A first result is that all these correct quantum labelings of a given hypergram inherently possess the same degree of contextuality. Then we provide a necessary and sufficient condition for the existence of such quantum labelings and an efficient algorithm to find one of them. We finally attach to each assignable hypergram an abstract notion of contextuality degree. By presenting the study of observable-based Kochen-Specker proofs from the perspectives of graphs and matrices, this abstraction opens the way to new methods to search for original contextual configurations.
\end{abstract}

\section{Introduction}
\label{seCintroduction}

In classical physical theories, the measured value of a physical quantity does
not depend on that of other quantities simultaneously measured, called its
\emph{context}. This independence no longer holds in quantum theory, where
Kochen-Specker theorem predicts the existence of experiments whose measurement
outcomes necessarily depend on measurements that are simultaneously measurable
in principle. This phenomenon is called \emph{quantum contextuality} (see,
e.\,g., \cite{bcgkl} for a recent comprehensive review of this topic). It is a
core aspect of quantum mechanics, especially for quantum computation.

Several approaches have been proposed to study quantum contextuality, such as
binary~\cite{Cleve2014} and linear~\cite{CLS17} constraint systems, which are
generalizations of the Boolean case~\cite{FOC25}, and graph-based~\cite{CSW14}
and hypergraph-based~\cite{AFL+15} approaches, where vertices are events and
edges are mutually exclusive events. Note that those hypergraphs are not the
same as the hypergraphs we introduce in this work, whose vertices represent
observables and whose hyperedges represent contexts. There is also a homotopical
approach~\cite{OR20} that describes topological criteria for the commutation
relations of quantum observables, and a study of contextuality in the framework
of Lie algebras~\cite{ACE+25}.

This work is about (observable-based) \emph{contextuality proofs}, whose
measurements are multi-qubit Pauli observables. These proofs are
\emph{state-independent}, because their measurements reveal quantum
contextuality when applied to any initial quantum state. When the number of
qubits of the Pauli observables is small enough, they are \emph{testable}, in
the sense that they can be turned into experimental tests of contextuality on
existing quantum computers~(see, e.\,g., \cite{KZGKGCBR09} or \cite{Hol21}).

Structurally, these contextuality proofs are described by hypergraphs whose
vertices are multi-qubit Pauli observables and whose hyperedges, also called
\emph{contexts}, group together compatible, that is, commuting observables whose
product is either the identity matrix (\emph{positive} hyperedge or context) or
its opposite (\emph{negative} hyperedge or context). How much a proof is
contextual can be quantified by an integer called its \emph{contextuality
degree}~\cite{DHGMS22}, which is the minimal number of context signs which
cannot be satisfied by any assignment of all its observables/vertices by one of
their measurement values (see~\Cref{contextualitySec} for a detailed
mathematical definition).

A widely studied subfamily of contextuality proofs is that of \emph{contextual
configurations}~\cite{HS17}, aka. \emph{magic sets}, whose observables belong to
an even number of contexts (\emph{parity condition}), whose number of negative
contexts is odd (\emph{oddness condition}), and which are incidence geometries,
meaning that two observables share at most one context (\emph{incidence
condition}). The contextuality of a magic set is an immediate logical
consequence of these conditions~\cite{HS17}. Typical examples are the
Mermin-Peres squares~\cite{mermin,peres}, composed of nine two-qubit observables
and six contexts of three observables, with one or three negative contexts among
them.

Previous works inspired by finite geometry~\cite{SdHG21,MSGDH22,MSGDH24} exhibit
contextuality proofs which do not satisfy the first two conditions of magic
sets. For example, multi-qubit doilies~\cite{MSGDH22} comprise three contexts
per observable, thus do not satisfy the parity condition. Moreover, three-qubit
doilies with four negative lines do not satisfy the oddness condition, but also
provide contextuality proofs. Actually, whatever their number of qubits and
configuration of negative contexts, all doilies have been proved to be
contextual and to admit the same contextuality degree, whose value
is~3~\cite[Proposition~1]{MSGDH22}. Independently, the non-contextual bound of
magic sets -- linearly related to their contextuality degree, as detailed
in~\Cref{contextualitySec} -- has recently been shown not to depend on their
number of qubits~\cite[Theorem~2]{TLC22}.

At first glance, according to its definition, the degree of a contextuality
proof depends on the distribution of positive and negative contexts in it,
itself arising from the Pauli observables which label their vertices. The main
objective of this work is to clarify this dependence. We achieve this by
introducing the notion of hypergram (\Cref{hypergramDef}), which is an
observable-free graph- and hypergraph-based structure admitting a definition of
contextuality degree. It is \emph{abstract} because it is defined independently
of a number of qubits and more generally without recourse to any quantum-related
concept.

In this paper, we first define hypergrams and related notions (\Cref{seCdefs}).
Then, we demonstrate that the contextuality degree of its labelings by Pauli
observables (hereafter called ``Pauli assignments'') does not depend on their
number of qubits (\Cref{seCproof}). Then we derive from~\cite{TLC22v2} (which
provides supplemental material for~\cite{TLC22}) a necessary and sufficient
condition for a hypergram to admit a Pauli assignment and we propose an
efficient algorithm to find such an assignment when this condition holds
(\Cref{seCcns}). An immediate consequence is that only one adequate labeling of
vertices with Pauli observables is sufficient to compute the degree.
In~\Cref{msgdh23Sec} we present several examples of hypergrams, labeled with
minimal numbers of qubits and negative contexts. A comparison with related work
is provided in~\Cref{seCcompare}.

\section{Definitions and notations}
\label{seCdefs}

After \Cref{seCbackground,symplSec} providing minimal essential background about
the Pauli group and its relation with symplectic polar spaces, \Cref{ccsSec}
introduces a new abstract structure, composed of a hypergraph and a graph, which
will later be shown to admit a notion of contextuality degree, inherited from
the contextuality degree common to all the labelings of its vertices by Pauli
observables. The remainder of the section brings together definitions from
independent previous work, mainly~\cite{TLC22v2} and~\cite{DHGMS22,MSGDH24}, and
exhibits correspondences between these definitions, when it is useful, for
example between the notions of ``Pauli assignment'' and ``quantum
configuration'' in~\Cref{assignQCsec}, and between the notions of
``contextuality degree'' and ``noncontextual bound'' in~\Cref{contextualitySec}.

\subsection{Multi-qubit Pauli group}
\label{seCbackground}

Let
\begin{equation*}
X = \left(
\begin{array}{rr}
0 & 1 \\
1 & 0 \\
\end{array}
\right),~~
Y = \left(
\begin{array}{rr}
0 & -\text{i} \\
\text{i} & 0 \\
\end{array}
\right)~~{\rm and}~~
Z = \left(
\begin{array}{rr}
1 & 0 \\
0 & -1 \\
\end{array}
\right)
\label{paulis}
\end{equation*}
be the Pauli matrices, $I$ the $2\times 2$ identity matrix, `$\otimes$' denote
the tensor product of matrices and $I^{\otimes n}$ denote the $n$-fold tensor
$I\otimes I \otimes\ldots\otimes I$ of the identity. A local \emph{$n$-qubit
(Pauli) observable} is a tensor product $G_1 \otimes G_2 \otimes \cdots \otimes
G_n$ with $G_i \in \{I,X,Y,Z\}$, usually denoted $G_1 G_2 \cdots G_n$, by
omitting the symbol $\otimes$ for the tensor product. Let '$\ctimes$' denote the
matrix product and $M^2$ denote $M \ctimes M$. It is easy to check that $X^2 =
Y^2 = Z^2 = I$, $X \ctimes Y = \text{i}Z = -Y \ctimes X$, $Y \ctimes Z =
\text{i}X = -Z \ctimes Y$, and $Z \ctimes X = \text{i}Y = -X \ctimes Z$. The
$n$-qubit observables with the multiplicative factors $\pm 1$ and $\pm
\text{i}$, called \emph{phase}, form the (\emph{generalized}) ($n$\emph{-qubit})
\emph{Pauli group} $\mathcal{P}^{\otimes n} =
(\{1,-1,\text{i},-\text{i}\}\times\{I,X,Y,Z\}^{\otimes n},\ctimes)$.

\subsection{Connection with symplectic polar spaces}
\label{symplSec}

Let $a$ and $b$ be two elements of the two-element field $\mathbb{F}_2 =
\{0,1\}$. Their sum, denoted $a+b$, and their product, denoted $ab$,
respectively correspond to the logical operations of exclusive disjunction and
conjunction, when $0$ encodes ``false'' and $1$ encodes ``true''.

The $2n$-dimensional vector space $\mathbb{F}_2^{2n}$ over $\mathbb{F}_2$ has
vector subspaces for each dimension $0 \leq k \leq 2n$. A subspace is
\emph{totally isotropic} if any two vectors $x$ and $y$ in it are mutually
orthogonal ($\innerproduct{x}{y} = 0$), for the symplectic form
$\innerproduct{.}{.}$ defined by
\begin{equation} 
\innerproduct{x}{y} = x_1y_2 + x_2y_1 + x_3y_4 + x_4y_3 + \dots 
              + x_{2n-1} y_{2n} + x_{2n} y_{2n-1}.
\label{symplf}
\end{equation}

The totally isotropic subspaces of $\mathbb{F}_2^{2n}$, without their zero
vector, form the \emph{symplectic (polar) space} $\mathcal{W}(2n-1,2)$
 of projective dimension $2n-1$. This name, in which $2$
is the order of the field $\mathbb{F}_2$, is hereafter shortened as $W_n$. In
other words, a (totally isotropic) subspace of $W_n$ of (projective) dimension
$k$, with $1 \leq k \leq n-1$, is a totally isotropic vector subspace of
$\mathbb{F}_2^{2n}$ of dimension $k+1$ without its $0$.

The $4^n-1$ phase-free $n$-qubit observables $G_1 \cdots G_j \cdots G_n$ in
$\mathcal{P}^{\otimes n}$ other than the identity $I^{\otimes n}$ are
bijectively identified with the $4^n-1$ vectors
$(x_1,x_2,\ldots,x_{2j-1},x_{2j},\ldots,x_{2n-1},x_{2n})$ which are the points
of $W_n$, by the extension $\psi:
\{I,X,Y,Z\}^{\otimes n}~\rightarrow~\mathbb{F}_2^{2n}$ of the encoding bijection
$\psi : \{I,X,Y,Z\} \rightarrow \mathbb{F}_2^{2}$ defined by
\begin{equation}
\psi(I) = (0,0),~\psi(X) = (0,1),~\psi(Y) = (1,1) {\rm~and}~\psi(Z) = (1,0).
\label{paulipts}
\end{equation}
This extension is defined by $\psi(G_1 \cdots G_j \cdots G_n) =
(x_1,x_2,\ldots,x_{2j-1},x_{2j},\ldots,x_{2n-1},x_{2n})$ with $\psi(G_j) =
(x_{2j-1},x_{2j})$ for $1~{\leq}~j~{\leq}~n$.

With the symplectic form defined by~(\ref{symplf}), two commuting observables
are represented by two orthogonal vectors.

\subsection{Abstract structure}
\label{ccsSec}

A \emph{simple graph} is an undirected graph without multiple edges and
\emph{loops}, i.\,e., edges $\{v,v\}$ for some vertex $v$. A \emph{hypergraph}
$\mathcal{H} = (V,H)$ is a finite set $V$ of \emph{vertices} and a (finite) set
$H$ of \emph{hyperedges}, which are (distinct) subsets of vertices in $V$. Two
vertices are \emph{adjacent} (in $\mathcal{H}$) if they are in the same
hyperedge of $H$. The \emph{complement graph} of the hypergraph $\mathcal{H} =
(V,H)$ is the (simple) graph $\text{cplt}(\mathcal{H}) = (V,\text{cplt}(H))$
with the same vertices as $\mathcal{H}$ and whose (undirected and non-loop)
edges are the sets of two distinct non-adjacent vertices in $\mathcal{H}$.
Formally, $\{v,v'\} \in \text{cplt}(H) \Leftrightarrow v \neq v' \land
\nexists\, h \in H.\; \{v,v'\} \subseteq h$. Following~\cite{GR01}, a graph is
said to be \emph{reduced} if it has no isolated vertex and no pair of vertices
with the same \emph{neighborhood}, which is their set of adjacent vertices.

With these definitions in mind, we can now introduce the hypergram that we
propose as the abstract structure underlying operator-based contextuality proofs
and determining their degree of contextuality.

\begin{definition}\label{hypergramDef} A \emph{hypergram} is a triple $(V,H,G)$
where $V$ is a non-empty finite set of \emph{vertices}, $(V,H)$ is a
hypergraph (called \emph{context hypergraph}) without isolated vertices (outside
any hyperedge) and empty hyperedges and $(V,G)$ is a simple reduced graph
(called \emph{anticommutation graph}) such that $G \subseteq \text{cplt}(H)$. We
say that two vertices $i$ and $j$ \emph{commute} if $\{i,j\} \not\in G$.
\end{definition}

By definition, each vertex commutes with itself. The inclusion $G \subseteq
\text{cplt}(H)$ means that all pairs of adjacent vertices in $H$ commute.

The following definition introduces two classical finite point-line geometries
which will serve as examples of hypergrams.

\begin{definition}
The \emph{doily} is the triangle-free self-dual finite incidence geometry
composed of 15 points and 15 lines, with three points on a line and, dually,
three lines through a point. In~\Cref{figTwospread} the doily is represented by
all the lines, either dashed or plain. A \emph{two-spread} is a point-line
geometry obtained from the doily by removing a \emph{spread}, i.\,e., a set of
hyperedges covering every vertex exactly once. In~\Cref{figTwospread} the
removed spread is represented by the dashed lines, and the two-spread by the
plain lines.
\end{definition}

When $G = \text{cplt}(H)$, the hypergram $(V,H,G)$ is identified with its
hypergraph $(V,H)$, as in~\Cref{exDoily}. Under the more restrictive conditions
of magic sets, these hypergraphs $(V,H)$ are considered in~\cite{TLC22}.

\begin{example}[Doily]\label{exDoily}
 As a hypergram, the doily is $S_d =
(\{1,\ldots,15\},H_d,G_d)= (\{1,\ldots,15\},H_d)$ with $H_d = \{$
$\{1,2,3\}$, $\{1,8,9\}$, $\{1,10,11\}$, $\{2,4,6\}$, $\{2,5,7\}$,
$\{3,12,15\}$, $\{3,13,14\}$, $\{4,8,12\}$, $\{4,10,14\}$, $\{5,8,13\}$,
$\{5,10,15\}$, $\{6,9,15\}$, $\{6,11,13\}$, $\{7,9,14\}$, $\{7,11,12\}$ $\}$ 
and $G_d = \text{cplt}(H_d)$.
\end{example}

The anticommutation graph $G$ added in our definition extends the framework to a
wider range of cases, when $G \subsetneq \text{cplt}(H)$, as illustrated
by~\Cref{ex1} and detailed in~\Cref{msgdh23Sec}.

\begin{example}[Running example]\label{ex1}
As running example, let us consider the hypergram $S_{2s} =
(V_{2s},H_{2s},G_{2s})$, called \emph{two-spread hypergram}, with the set of 15
vertices $V_{2s} = \{1,\ldots,15\}$, the set of ten hyperedges $H_{2s} = \{$
$\{1,2,3\}$, $\{1,10,11\}$, $\{2,4,6\}$, $\{3,13,14\}$, $\{4,8,12\}$,
$\{5,8,13\}$, $\{5,10,15\}$, $\{6,9,15\}$, $\{7,9,14\}$, $\{7,11,12\}$ $\}$ and
the set of anticommutations $G_{2s} = \{$ $\{1,4\}$, $\{1,5\}$, $\{1,6\}$,
$\{1,7\}$, $\{1,12\}$, $\{1,13\}$, $\{1,14\}$, $\{1,15\}$, $\{2,8\}$, $\{2,9\}$,
$\{2,10\}$, $\{2,11\}$, $\{2,12\}$, $\{2,13\}$, $\{2,14\}$, $\{2,15\}$,
$\{3,4\}$, $\{3,5\}$, $\{3,6\}$, $\{3,7\}$, $\{3,8\}$, $\{3,9\}$, $\{3,10\}$,
$\{3,11\}$, $\{4,5\}$, $\{4,7\}$, $\{4,9\}$, $\{4,11\}$, $\{4,13\}$, $\{4,15\}$,
$\{5,6\}$, $\{5,9\}$, $\{5,11\}$, $\{5,12\}$, $\{5,14\}$, $\{6,7\}$, $\{6,8\}$,
$\{6,10\}$, $\{6,12\}$, $\{6,14\}$, $\{7,8\}$, $\{7,10\}$, $\{7,13\}$,
$\{7,15\}$, $\{8,10\}$, $\{8,11\}$, $\{8,14\}$, $\{8,15\}$, $\{9,10\}$,
$\{9,11\}$, $\{9,12\}$, $\{9,13\}$, $\{10,12\}$, $\{10,13\}$, $\{11,14\}$,
$\{11,15\}$, $\{12,13\}$, $\{12,14\}$, $\{13,15\}$, $\{14,15\}$ $\}$. For
instance, the edge $\{1,8\}$ is in $\text{cplt}(H_{2s})$ but not in $G_{2s}$, so
$G_{2s} \subsetneq \text{cplt}(H_{2s})$ in this case.
\end{example}

\begin{figure}[htb!]
\begin{center}
\begin{tikzpicture}[every plot/.style={smooth, tension=2},
  scale=2.5,
  every node/.style={scale=0.7,circle,draw=black,fill=white,minimum size=1.0cm}
]
  \coordinate (IX) at (0,1.0);
  \coordinate (ZX) at (0,-0.80);
  \coordinate (ZI) at (0,-0.40);
  \coordinate (XX) at (-0.95,0.30);
  \coordinate (ZZ) at (0.76,-0.24);
  \coordinate (YY) at (0.38,-0.12);
  \coordinate (YZ) at (-0.58,-0.80);
  \coordinate (YI) at (0.47,0.65);
  \coordinate (IZ) at (0.23,0.32);
  \coordinate (XY) at (0.58,-0.80);
  \coordinate (XI) at (-0.47,0.65);
  \coordinate (IY) at (-0.23,0.32);
  \coordinate (YX) at (0.95,0.30);
  \coordinate (ZY) at (-0.76,-0.24);
  \coordinate (XZ) at (-0.38,-0.12);

\draw [lightgray,dashed] (YX) -- (XZ) -- (ZY);
\draw [lightgray,dashed] (XX) -- (YY) -- (ZZ);
\draw [lightgray,dashed] (IX) -- (ZI) -- (ZX);
\draw [lightgray,dashed] (XI) -- (IY) -- (XY);
\draw [lightgray,dashed] (YZ) -- (IZ) -- (YI);

\draw (IX) -- (XI) -- (XX);
\draw [style=double] (YZ) -- (ZX) -- (XY);
\draw (XY) -- (ZZ) -- (YX);
\draw (YX) -- (YI) -- (IX);
\draw (XX) -- (ZY) -- (YZ);

\draw plot coordinates{(ZI) (ZY) (IY)};
\draw plot coordinates{(IZ) (ZZ) (ZI)};
\draw plot coordinates{(YY) (ZX) (XZ)};
\draw plot coordinates{(IY) (YI) (YY)};
\draw plot coordinates{(XZ) (XI) (IZ)};

\node at (ZY) {$\vqa{13}{ZY}$};
\node at (XX) {$\vqa{3}{XX}$};
\node at (YZ) {$\vqa{14}{YZ}$};
\node at (ZZ) {$\vqa{12}{ZZ}$};
\node at (IX) {$\vqa{1}{IX}$};
\node at (YY) {$\vqa{15}{YY}$};
\node at (XI) {$\vqa{2}{XI}$};
\node at (YX) {$\vqa{11}{YX}$};
\node at (XZ) {$\vqa{6}{XZ}$};
\node at (ZI) {$\vqa{8}{ZI}$};
\node at (IY) {$\vqa{5}{IY}$};
\node at (XY) {$\vqa{7}{XY}$};
\node at (YI) {$\vqa{10}{YI}$};
\node at (IZ) {$\vqa{4}{IZ}$};
\node at (ZX) {$\vqa{9}{ZX}$};
\end{tikzpicture}
\end{center}
\caption{Illustration of the two-spread hypergram $S_{2s}$ defined in~\Cref{ex1}.
Each circled node is labeled
by a vertex $i$ in $V_{2s}$, a semi-colon, and the Pauli observable
$\alpha_{2s}(i)$ assigned to the vertex $i$ by the 2-qubit Pauli assignment
$\alpha_{2s}$ presented in~\Cref{assignQCsec}. Each hyperedge is represented by
a single or double continuous line, either straight or curved. It is composed of
three vertices. The negative context is represented by a double line. The set
$G_{2s}$ of anticommutation edges is composed of all pairs of vertices not
belonging to a common continuous or dashed line, either simple or double.}
\label{figTwospread}
\end{figure}
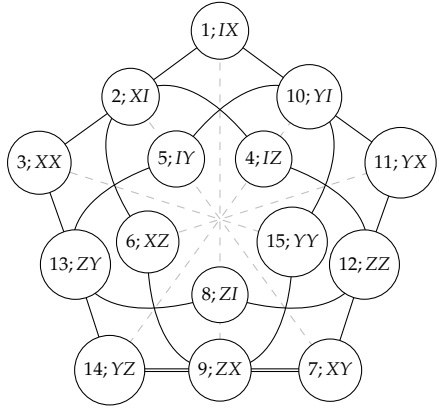

\subsection{Assignments and quantum configurations}
\label{assignQCsec}

A \emph{($n$-qubit) Pauli assignment} of a hypergram $(V,H,G)$ is an injective
function $\alpha$ from $V$ to $\{I,X,Y,Z\}^{\otimes n}-\{I^{\otimes n} \}$ that
assigns a distinct $n$-qubit Pauli observable (different from identity) to all
its vertices, such that two distinct Pauli observables $\alpha(v_1)$ and
$\alpha(v_2)$ anticommute if and only if $\{v_1, v_2\} \in G$ (\emph{commutation
condition}), and the product of all assignments of vertices in any hyperedge $h
\in H$ is the identity matrix or its opposite (formally, $\prod_{v \in h}
\alpha(v) = \pm I^{\otimes n}$) (\emph{product condition}).

\addtocounter{example}{-1}
\begin{example}[continued] An example of 2-qubit Pauli assignment of the
two-spread hypergram $S_{2s}$ is the function $\alpha_{2s} : V_{2s} \rightarrow
\{I,X,Y,Z\}^{\otimes 2} - \{I^{\otimes 2} \}$ defined by $\alpha_{2s}(1)=IX$,
$\alpha_{2s}(2)=XI$, $\alpha_{2s}(3)=XX$, $\alpha_{2s}(4)=IZ$,
$\alpha_{2s}(5)=IY$, $\alpha_{2s}(6)=XZ$, $\alpha_{2s}(7)=XY$,
$\alpha_{2s}(8)=ZI$, $\alpha_{2s}(9)=ZX$, $\alpha_{2s}(10)=YI$,
$\alpha_{2s}(11)=YX$, $\alpha_{2s}(12)=ZZ$, $\alpha_{2s}(13)=ZY$,
$\alpha_{2s}(14)=YZ$, and $\alpha_{2s}(15)=YY$. This assignment is illustrated
by a labeling in~\Cref{figTwospread}.
\end{example}

A Pauli assignment corresponds to a ``quantum satisfying assignment'' of a
binary constraint system~\cite{Cleve2014} and to a ``finite dimensional operator
solution'' to a binary linear system~\cite{CLS17}, whose matrix is the incidence
matrix of the hypergraph $H$. It also corresponds to a generalization of a
``quantum realization of a signed arrangement''~\cite{Arkhipov2012}, without the
condition that each vertex is included in exactly two hyperedges of the
hypergraph that depicts the contexts. Other works~\cite{CSW14,AFL+15} consider
graphs and hypergraphs whose vertices represent measurement outcomes, while we
consider hypergraphs whose vertices are labeled by Pauli observables. A
correspondence between these approaches can be found in~\cite[Appendix
D]{AFL+15}. This notion is also close to that of a Pauli-based
assignment~\cite{TLC22v2} (see~\Cref{seCcompare} for details) and to the
following one, coming from our previous work.

A \emph{quantum configuration}~\cite{MSGDH24} (called
\emph{quantum geometry} in~\cite{DHGMS22}) is a pair $(O,C)$ where $O$ is a
non-empty finite set of observables ($2^n$-dimensional Hermitian operators) and
$C$ is a finite set of non-empty subsets of $O$, called \emph{contexts}, such
that (i) each observable $a \in O$ satisfies $a^2=I^{\otimes n}$ (so, its
eigenvalues are in $\{-1,1\}$); (ii) any two observables $a$ and $b$ in the same
context commute, i.\,e., $a \ctimes b=b \ctimes a$; (iii) the product of all
observables in each context is either $I^{\otimes n}$ (\emph{positive} context)
or $-I^{\otimes n}$ (\emph{negative} context). In all that follows, these
observables are always phase-free Pauli observables.

\addtocounter{example}{-1}
\begin{example}[continued] \Cref{figTwospread} also shows a two-spread as a
quantum configuration, with only one negative context, represented by a double
line. The product of the 3 observables in this line is equal to $-II$, as
opposed to $+II$ for all the other lines.
\end{example}

A pair composed of a hypergram $(V,H,G)$ and an $n$-qubit Pauli assignment
$\alpha$ of it can be associated to any quantum configuration $(O,C)$ with $|O|$
$n$-qubit phase-free Pauli observables, as follows. Let $V=\{1,2,\ldots,|O|\}$.
Let $\alpha$ be a bijection from $V$ to $O$. With a small abuse of notation, let
us also denote by $\alpha$ the extension of $\alpha$ to subsets of $V$, and the
extension of the latter to subsets of subsets of $V$. Let $H$ be the inverse
image of $C$ by this last extension $\alpha$. In other words, $H$ is the set of
subsets $h$ of $V$ such that $v$ and $v'$ are in $h$ if and only if $\alpha(v)$
and $\alpha(v')$ are in the same context in $C$. Let $G$ be defined by $\{v,v'\}
\in G$ if and only if $\alpha(v)$ and $\alpha(v')$ anticommute. Then $(V,H,G)$
is a hypergram and $\alpha$ is an $n$-qubit Pauli assignment on it.
An important part of the article establishes conditions under which
the quantum notion of degree of contextuality, initially defined on a
quantum configuration (as detailed in~\Cref{contextualitySec}) becomes an
abstract notion on the corresponding hypergram, as
summarized in~\Cref{abstractCdegDef}.

Conversely, the \emph{quantum configuration associated to} a Pauli assignment
$\alpha$ of a hypergram $(V,H,G)$ is the pair $(O,C)$ such that $O = \alpha(V)$
and $C = \alpha(H)$. In all that follows, most of the notions associated with a
hypergram have their counterpart for the corresponding quantum configuration
through this correspondence.

Note that a consequence of the commutation condition and the inclusion condition
$G \subseteq \text{cplt}(H)$ for any hypergram $(V,H,G)$ is that all elements of
a context (image of a hyperedge by a Pauli assignment $\alpha$) mutually
commute. This is why this condition is not in our definition of a Pauli
assignment.

\subsection{Sign functions}
\label{signSec}

For any subset $S$ of the Pauli group whose elements mutually commute,
commutativity enables us to define $\Pi_S$ as the generalized product $\Pi_{s
\in S}\, s$ of all the elements of $S$.

Let $\alpha$ be an $n$-qubit Pauli assignment of a hypergram $(V,H,G)$, and $h$
a hyperedge in $H$. Since all elements in any context $\alpha(h)$ mutually
commute, the product $\Pi_{\alpha(h)}$ is well-defined. The \emph{sign} (or
\emph{valuation}) (\emph{function}) of $\alpha$ is the function
$\text{sgn}_{\alpha} : H \rightarrow \{-1,1\}$ such that $\Pi_{\alpha(h)} =
\text{sgn}_{\alpha}(h)\,I^{\otimes n}$ for all hyperedges $h$ in $H$.
Similarly, the sign function for a quantum configuration $(O,C)$ of $n$-qubit
observables is the function $s : C \rightarrow \{-1,1\}$ defined by $\Pi_{c} =
s(c)\,I^{\otimes n}$ for each context $c \in C$.

A \emph{classical assignment} $a : V \rightarrow \{-1,+1\}$ assigns a value $\pm
1$ to each vertex of a hypergram $(V,H,G)$. The \emph{sign (function)} of the
classical assignment $a$ is the function
$\text{sgn}_{a} : H \rightarrow \{-1,1\}$ defined by $\text{sgn}_{a}(h) =
\Pi_{a(h)} = \Pi_{v \in h}~a(v)$ for all hyperedges $h$ in $H$.

The hyperedge $h$ in $H$ is said to be \emph{satisfied}
(resp., \emph{unsatisfied}) when the signs $\text{sgn}_{\alpha}(h)$ and
$\text{sgn}_{a}(h)$ of its Pauli and classical assignments are the same
(resp., differ, and thus are opposite). These definitions also apply to contexts
and lines.

\begin{example}
In~\Cref{figDoilyAlgo:a,figDoilyAlgo:b,figDoilyAlgo:c}, the classical assignment
is represented by the numbers below the observables in each node. The dashed
lines are the unsatisfied lines on which the signs of the Pauli and classical
assignments differ.
\end{example}

In the above-mentioned approach of constraint/linear
systems~\cite{Cleve2014,CLS17}, the second member of these systems is composed
of the values of a given sign function. Similarly, by definition, a quantum
realization of a signed arrangement~\cite{Arkhipov2012} should also satisfy a
given sign function. A significant difference in~\cite{TLC22} and in the present
work is that we address a potentially simpler problem, namely finding a Pauli
assignment without targeting any given sign function. We precisely show
in~\Cref{seCproof} that satisfying a given sign function is a too strong
constraint for the objective we pursue of finding remarkable state-independent
Kochen-Specker proofs.

\subsection{Contextuality degree and noncontextual bound}
\label{contextualitySec}

With the former definitions, the contextuality of a Pauli assignment can be
defined and quantified by a natural number, such as the ``degree of
contextuality''~\cite{MSGDH24} or the ``noncontextual bound''~\cite{Cab10}, with
the following definitions and relation between them.

\begin{definition}
A Pauli assignment $\alpha$ of a hypergram $(V,H,G)$ is \emph{contextual} if
there is no classical assignment $a$ with the same sign function as $\alpha$
(over $H$). The \emph{contextuality degree}~\cite{DHGMS22} $d$ of the Pauli
assignment $\alpha$ for the hypergram $(V,H,G)$ is the minimal Hamming distance
(i.\,e., number of different values) between its sign function
$\text{sgn}_{\alpha}$ and the sign function $\text{sgn}_{a}$ of any classical
assignment $a : V \rightarrow \{-1,+1\}$.
\end{definition}

In other words, the degree of contextuality is the minimal number of different
hyperedge products between this Pauli assignment and any classical assignment.
For instance, we shall see (\Cref{2spreadProp}) that the contextuality degree of
all two-spreads is 1, meaning that at least one product will always be different
between any classical and Pauli assignments of a two-spread.

For any quantum configuration $(O,C)$, let
\begin{equation}
\chi = \sum_{c\in C} s(c) \, \langle c\rangle
\label{chiEq}
\end{equation}
be the sum of the expectation values $\langle c\rangle$ of all contexts $c$,
multiplied by their sign. A \emph{Non-Contextual Hidden-Variable theory} is a
theory in which the values of the physical observables are the same
irrespectively of the experimental context which they belong to (NCHV
hypothesis). Without this NCHV hypothesis, all the sign constraints can be
satisfied, with the expectation value $+1$ for positive contexts and $-1$ for
negative ones, so the upper bound for $\chi$ is the number $|C|$ of contexts of
$(O,C)$. However, under the NCHV hypothesis, at least $d$ sign constraints
cannot be satisfied. The expectation value of each unsatisfied context being the
opposite of its sign, the upper bound of $\chi$ under this hypothesis is reduced
by $2d$. This \emph{noncontextual bound}~\cite{Cab10,TLC22} $b$ is thus related
to $d$ by
\begin{align}
\label{bdsum}
b = |C| - 2d.
\end{align}

A quantum configuration $(O,C)$ can be transformed into an experimental
observable-based test to witness state-independent contextuality (see, e.\,g.,
\cite{KH25}). It is successful if the measurement errors are small enough to
measure a value of $\chi$ above its upper bound $(|C| - 2d)$ under the NCHV
hypothesis. Furthermore, a quantum configuration $(O,C)$ can be turned into a
non-local game, providing a proof of nonlocality of quantum physics, as detailed
in~\cite{Cleve2014,CLS17}, since it can be associated with a binary constraint
system.

\section{All Pauli assignments have the same contextuality degree}
\label{seCproof}

This section shows how to transfer a classical assignment between two Pauli
assignments of the same hypergram (\Cref{assignmentCopy}) and proves that all
these Pauli assignments have the same contextuality degree (\Cref{mainResult}),
a generalization of~\cite[Proposition 14]{TLC22v2} to all hypergrams.

Let the \emph{tensor product of two Pauli assignments} $\alpha_1$ and $\alpha_2$
of the same hypergram $(V,H,G)$ be the Pauli assignment $\alpha_{1\otimes2}$
defined by $\alpha_{1\otimes2}(v) = \alpha_1(v) \otimes \alpha_2(v)$ for all
vertices $v$ in $V$. \Cref{figDoilyAlgo} presents an example of tensor product
of two Pauli assignments of the doily structure. It will serve to illustrate the
proof arguments for~\Cref{assignmentCopy} and~\Cref{mainResult}.

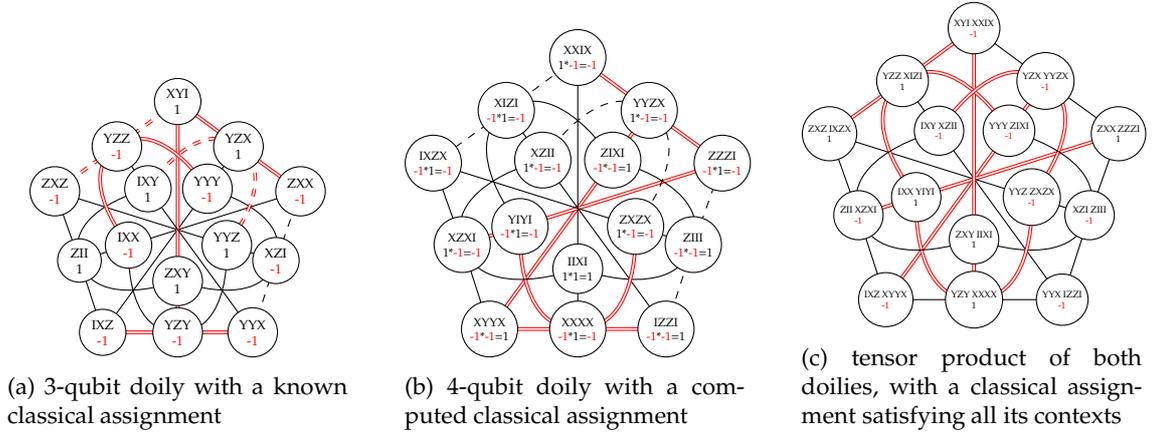
\begin{figure}[htbp!]
    \centering
    \begin{subfigure}{0.3\textwidth}
        \centering
\begin{tikzpicture}[every plot/.style={smooth, tension=2},
    scale=1.7,
    every node/.style={scale=0.5,fill=white,circle,draw=black, minimum size=1.1cm}
  ]
  \coordinate (IY) at (0,1.0);
  \coordinate (YY) at (0,-0.80);
  \coordinate (YI) at (0,-0.40);
  \coordinate (XI) at (-0.95,0.30);
  \coordinate (IZ) at (0.76,-0.24);
  \coordinate (XZ) at (0.38,-0.12);
  \coordinate (XX) at (-0.58,-0.80);
  \coordinate (ZY) at (0.47,0.65);
  \coordinate (YZ) at (0.23,0.32);
  \coordinate (ZZ) at (0.58,-0.80);
  \coordinate (XY) at (-0.47,0.65);
  \coordinate (YX) at (-0.23,0.32);
  \coordinate (ZI) at (0.95,0.30);
  \coordinate (IX) at (-0.76,-0.24);
  \coordinate (ZX) at (-0.38,-0.12);
    \draw (XI) -- (IX) -- (XX);
    \draw (ZI) -- (ZX) -- (IX);
    \draw plot coordinates{(YI) (IX) (YX)};
    \draw (XI) -- (XZ) -- (IZ);
    \draw [cRed,style=double,dashed] (IY) -- (XY) -- (XI);
    \draw [cRed,style=double] (XX) -- (YY) -- (ZZ);
    \draw (XX) -- (YZ) -- (ZY);
    \draw[dashed] (ZZ) -- (IZ) -- (ZI);
    \draw plot coordinates{(YZ) (IZ) (YI)};
    \draw [cRed,style=double] (ZI) -- (ZY) -- (IY);
    \draw [cRed,style=double] (IY) -- (YI) -- (YY);
    \draw plot coordinates{(XZ) (YY) (ZX)};
    \draw [cRed,style=double,dashed] plot coordinates{(YX) (ZY) (XZ)};
    \draw [cRed,style=double] plot coordinates{(ZX) (XY) (YZ)};
    \draw (XY) -- (YX) -- (ZZ);
    \node[align=center] at (IX) {ZII\\1};
    \node[align=center] at (XI) {ZXZ\\\textcolor{cRed}{-1}};
    \node[align=center] at (XX) {IXZ\\\textcolor{cRed}{-1}};
    \node[align=center] at (IZ) {XZI\\\textcolor{cRed}{-1}};
    \node[align=center] at (IY) {XYI\\1};
    \node[align=center] at (XZ) {YYZ\\1};
    \node[align=center] at (XY) {YZZ\\\textcolor{cRed}{-1}};
    \node[align=center] at (ZI) {ZXX\\\textcolor{cRed}{-1}};
    \node[align=center] at (ZX) {IXX\\\textcolor{cRed}{-1}};
    \node[align=center] at (YI) {ZXY\\1};
    \node[align=center] at (YX) {IXY\\1};
    \node[align=center] at (ZZ) {YYX\\\textcolor{cRed}{-1}};
    \node[align=center] at (ZY) {YZX\\1};
    \node[align=center] at (YZ) {YYY\\\textcolor{cRed}{-1}};
    \node[align=center] at (YY) {YZY\\\textcolor{cRed}{-1}};
  \end{tikzpicture}
        \caption{3-qubit doily with a known classical assignment}
        \label{figDoilyAlgo:a}
    \end{subfigure}
    \hfill
    \begin{subfigure}{0.3\textwidth}
        \centering
\begin{tikzpicture}[every plot/.style={smooth, tension=2},
    scale=2,
    every node/.style={scale=0.5,fill=white,circle,draw=black, minimum size=1.3cm}
  ]
  \coordinate (xxix) at (0,1.0);
  \coordinate (xxxx) at (0,-0.80);
  \coordinate (iixi) at (0,-0.40);
  \coordinate (ixzx) at (-0.95,0.30);
  \coordinate (ziii) at (0.76,-0.24);
  \coordinate (zxzx) at (0.38,-0.12);
  \coordinate (xyyx) at (-0.58,-0.80);
  \coordinate (yyzx) at (0.47,0.65);
  \coordinate (zixi) at (0.23,0.32);
  \coordinate (izzi) at (0.58,-0.80);
  \coordinate (xizi) at (-0.47,0.65);
  \coordinate (xzii) at (-0.23,0.32);
  \coordinate (zzzi) at (0.95,0.30);
  \coordinate (xzxi) at (-0.76,-0.24);
  \coordinate (yiyi) at (-0.38,-0.12);
  \draw[dashed]  (xxix) -- (xizi) -- (ixzx);
  \draw  (ixzx) -- (xzxi) -- (xyyx);
  \draw[double,draw=cRed]  (xyyx) -- (xxxx) -- (izzi);
  \draw[dashed]  (izzi) -- (ziii) -- (zzzi);
  \draw[double,draw=cRed]  (zzzi) -- (yyzx) -- (xxix);
  \draw[double,draw=cRed] (xyyx) -- (zixi) -- (yyzx);
  \draw  (xizi) -- (xzii) -- (izzi);
  \draw[double,draw=cRed]  (zzzi) -- (yiyi) -- (xzxi);
  \draw  (ixzx) -- (zxzx) -- (ziii);
  \draw  (xxix) -- (iixi) -- (xxxx);
  \draw  plot coordinates{(yiyi) (xizi) (zixi)};
  \draw[dashed]  plot coordinates{(xzii) (yyzx) (zxzx)};
  \draw  plot coordinates{(zixi) (ziii) (iixi)};
  \draw[double,draw=cRed]  plot coordinates{(zxzx) (xxxx) (yiyi)};
  \draw  plot coordinates{(iixi) (xzxi) (xzii)};
  \small
  \node[align=center]  at (xxix) {XXIX\\1*\textcolor{cRed}{-1}=\textcolor{cRed}{-1}};
  \node[align=center]  at (xxxx) {XXXX\\\textcolor{cRed}{-1}*1=\textcolor{cRed}{-1}};
  \node[align=center]  at (iixi) {IIXI\\1*1=1};
  \node[align=center]  at (ixzx) {IXZX\\\textcolor{cRed}{-1}*1=\textcolor{cRed}{-1}};
  \node[align=center]  at (ziii) {ZIII\\\textcolor{cRed}{-1}*\textcolor{cRed}{-1}=1};
  \node[align=center]  at (zxzx) {ZXZX\\1*\textcolor{cRed}{-1}=\textcolor{cRed}{-1}};
  \node[align=center]  at (xyyx) {XYYX\\\textcolor{cRed}{-1}*\textcolor{cRed}{-1}=1};
  \node[align=center]  at (yyzx) {YYZX\\1*\textcolor{cRed}{-1}=\textcolor{cRed}{-1}};
  \node[align=center]  at (zixi) {ZIXI\\\textcolor{cRed}{-1}*\textcolor{cRed}{-1}=1};
  \node[align=center]  at (izzi) {IZZI\\\textcolor{cRed}{-1}*\textcolor{cRed}{-1}=1};
  \node[align=center]  at (xizi) {XIZI\\\textcolor{cRed}{-1}*1=\textcolor{cRed}{-1}};
  \node[align=center]  at (xzii) {XZII\\1*\textcolor{cRed}{-1}=\textcolor{cRed}{-1}};
  \node[align=center]  at (zzzi) {ZZZI\\\textcolor{cRed}{-1}*1=\textcolor{cRed}{-1}};
  \node[align=center]  at (xzxi) {XZXI\\1*\textcolor{cRed}{-1}=\textcolor{cRed}{-1}};
  \node[align=center]  at (yiyi) {YIYI\\\textcolor{cRed}{-1}*1=\textcolor{cRed}{-1}};
  \end{tikzpicture}
        \caption{4-qubit doily with a computed classical assignment}
        \label{figDoilyAlgo:b}
    \end{subfigure}
    \hfill
    \begin{subfigure}{0.3\textwidth}
        \centering
\begin{tikzpicture}[every plot/.style={smooth, tension=2},
    scale=2,
    every node/.style={scale=0.5,fill=white,circle,draw=black, minimum size=1.3cm}
  ]
  \coordinate (xxix) at (0,1.0);
  \coordinate (xxxx) at (0,-0.80);
  \coordinate (iixi) at (0,-0.40);
  \coordinate (ixzx) at (-0.95,0.30);
  \coordinate (ziii) at (0.76,-0.24);
  \coordinate (zxzx) at (0.38,-0.12);
  \coordinate (xyyx) at (-0.58,-0.80);
  \coordinate (yyzx) at (0.47,0.65);
  \coordinate (zixi) at (0.23,0.32);
  \coordinate (izzi) at (0.58,-0.80);
  \coordinate (xizi) at (-0.47,0.65);
  \coordinate (xzii) at (-0.23,0.32);
  \coordinate (zzzi) at (0.95,0.30);
  \coordinate (xzxi) at (-0.76,-0.24);
  \coordinate (yiyi) at (-0.38,-0.12);
  \draw[double,draw=cRed]  (xxix) -- (xizi) -- (ixzx);
  \draw  (ixzx) -- (xzxi) -- (xyyx);
  \draw  (xyyx) -- (xxxx) -- (izzi);
  \draw  (izzi) -- (ziii) -- (zzzi);
  \draw  (zzzi) -- (yyzx) -- (xxix);
  \draw[double,draw=cRed]  (xyyx) -- (zixi) -- (yyzx);
  \draw  (xizi) -- (xzii) -- (izzi);
  \draw[double,draw=cRed]  (zzzi) -- (yiyi) -- (xzxi);
  \draw  (ixzx) -- (zxzx) -- (ziii);
  \draw[double,draw=cRed]  (xxix) -- (iixi) -- (xxxx);
  \draw[double,draw=cRed]  plot coordinates{(yiyi) (xizi) (zixi)};
  \draw[double,draw=cRed]  plot coordinates{(xzii) (yyzx) (zxzx)};
  \draw  plot coordinates{(zixi) (ziii) (iixi)};
  \draw[double,draw=cRed]  plot coordinates{(zxzx) (xxxx) (yiyi)};
  \draw  plot coordinates{(iixi) (xzxi) (xzii)};
  \scriptsize
  \node[align=center]  at (xxix) {XYI XXIX\\\color{cRed}-1};
  \node[align=center]  at (xxxx) {YZY XXXX\\1};
  \node[align=center]  at (iixi) {ZXY IIXI\\1};
  \node[align=center]  at (ixzx) {ZXZ IXZX\\1};
  \node[align=center]  at (ziii) {XZI ZIII\\\color{cRed}-1};
  \node[align=center]  at (zxzx) {YYZ ZXZX\\\color{cRed}-1};
  \node[align=center]  at (xyyx) {IXZ XYYX\\\color{cRed}-1};
  \node[align=center]  at (yyzx) {YZX YYZX\\\color{cRed}-1};
  \node[align=center]  at (zixi) {YYY ZIXI\\\color{cRed}-1};
  \node[align=center]  at (izzi) {YYX IZZI\\\color{cRed}-1};
  \node[align=center]  at (xizi) {YZZ XIZI\\1};
  \node[align=center]  at (xzii) {IXY XZII\\\color{cRed}-1};
  \node[align=center]  at (zzzi) {ZXX ZZZI\\1};
  \node[align=center]  at (xzxi) {ZII XZXI\\\color{cRed}-1};
  \node[align=center]  at (yiyi) {IXX YIYI\\1};
  \end{tikzpicture}
        \caption{tensor product of both doilies, with a classical assignment satisfying all its contexts}
        \label{figDoilyAlgo:c}
    \end{subfigure}
    \caption{Illustration of the classical assignment transfer process, between two
    Pauli assignments of the same structure, here the doily. The dashed lines
    are the unsatisfied ones.\label{figDoilyAlgo}}
\end{figure}

\begin{lemma}
\label{assignmentCopy}
Let $\alpha_1$ and $\alpha_2$ be two Pauli assignments of the same hypergram
$(V,H,G)$. Let $a_1$ be a classical assignment of $V$ and $S$ the subset of
hyperedges in $H$ satisfied by $a_1$ for $\alpha_1$. Then there exists a
classical assignment $a_2$ whose set of hyperedges satisfied for $\alpha_2$ is
also $S$.
\end{lemma}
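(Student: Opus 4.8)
The plan is to reduce the statement to the non-contextuality of commutative configurations (\Cref{commutNCprop}) by passing through the tensor product assignment $\alpha_{1\otimes2}$ defined just above, and then to correct $a_1$ by the classical assignment that \Cref{commutNCprop} produces for $\alpha_{1\otimes2}$.

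First I would compute the sign function of $\alpha_{1\otimes2}$. Fix $h \in H$. Within any context the images of a Pauli assignment commute, so $\Pi_{\alpha_{1\otimes2}(h)}$ is order-independent, and the mixed-product identity $(A\otimes B)\ctimes(C\otimes D) = (A\ctimes C)\otimes(B\ctimes D)$ applied factor by factor in a common order yields
\[
\Pi_{\alpha_{1\otimes2}(h)} = \Big(\Pi_{\alpha_1(h)}\Big)\otimes\Big(\Pi_{\alpha_2(h)}\Big) = \text{sgn}_{\alpha_1}(h)\,\text{sgn}_{\alpha_2}(h)\,I^{\otimes 2n},
\]
so that $\text{sgn}_{\alpha_{1\otimes2}}(h) = \text{sgn}_{\alpha_1}(h)\,\text{sgn}_{\alpha_2}(h)$.

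Next comes the crucial observation: all observables of $\alpha_{1\otimes2}(V)$ pairwise commute. Indeed, because $\alpha_1$ and $\alpha_2$ are assignments of the \emph{same} hypergram, for any two vertices $v,w$ the pairs $\alpha_1(v),\alpha_1(w)$ and $\alpha_2(v),\alpha_2(w)$ anticommute under exactly the same condition $\{v,w\}\in G$; hence they either both commute or both anticommute, and in both cases the two possible sign flips cancel in the mixed product, making $\alpha_{1\otimes2}(v)$ and $\alpha_{1\otimes2}(w)$ commute. Since moreover each $\alpha_{1\otimes2}(v)$ squares to $I^{\otimes 2n}$ and each hyperedge product equals $\pm I^{\otimes 2n}$ by the previous computation, the pair $(\alpha_{1\otimes2}(V),\alpha_{1\otimes2}(H))$ is a commutative quantum configuration; it is well-defined because $\alpha_{1\otimes2}$ is injective (as $\alpha_1$ already is), so $\alpha_{1\otimes2}(H)$ is in bijection with $H$. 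By \Cref{commutNCprop} this configuration is non-contextual, which, transported back along $\alpha_{1\otimes2}$, gives a classical assignment $a_{12} : V \rightarrow \{-1,1\}$ with $\text{sgn}_{a_{12}}(h) = \text{sgn}_{\alpha_{1\otimes2}}(h) = \text{sgn}_{\alpha_1}(h)\,\text{sgn}_{\alpha_2}(h)$ for every $h$.

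Finally I would take $a_2$ to be the pointwise product $a_2(v) = a_1(v)\,a_{12}(v)$, whence $\text{sgn}_{a_2}(h) = \text{sgn}_{a_1}(h)\,\text{sgn}_{\alpha_1}(h)\,\text{sgn}_{\alpha_2}(h)$. Multiplying by the $\pm1$ factor $\text{sgn}_{\alpha_2}(h)$ shows that $\text{sgn}_{a_2}(h) = \text{sgn}_{\alpha_2}(h)$ holds if and only if $\text{sgn}_{a_1}(h) = \text{sgn}_{\alpha_1}(h)$, i.e.\ $h$ is satisfied by $a_2$ for $\alpha_2$ precisely when it is satisfied by $a_1$ for $\alpha_1$; so the satisfied set is again $S$. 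The only delicate step — and the main obstacle — is the commutativity of the tensor product, as this is exactly what licenses the appeal to \Cref{commutNCprop}; everything else is $\{-1,1\}$ (equivalently $\mathbb{F}_2$) sign bookkeeping.
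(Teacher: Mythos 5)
Your proposal is correct and follows essentially the same route as the paper's proof: show that the tensor-product assignment $\alpha_{1\otimes2}$ has pairwise-commuting images because the anticommutation patterns of $\alpha_1$ and $\alpha_2$ coincide (both governed by $G$), invoke \Cref{commutNCprop} to obtain a classical assignment satisfying all contexts of the resulting commutative configuration, and multiply it pointwise with $a_1$ to get $a_2$. Your write-up is in fact slightly more explicit than the paper's on the sign bookkeeping, spelling out $\text{sgn}_{\alpha_{1\otimes2}} = \text{sgn}_{\alpha_1}\,\text{sgn}_{\alpha_2}$ and the injectivity of $\alpha_{1\otimes2}$, which the paper leaves implicit.
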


\begin{proof}
\noindent For any two vertices $v_1$ and $v_2$ in $V$, either the two pairs
$(\alpha_1(v_1),\alpha_1(v_2))$ and $(\alpha_2(v_1),\alpha_2(v_2))$ of their
images by $\alpha_1$ and $\alpha_2$ anticommute (if $\{v_1,v_2\} \in G$), or
they both commute. By elementary algebraic computations, with $s = \pm 1$,
\begin{align*}
\alpha_{1\otimes2}(v_1) \ctimes \alpha_{1\otimes2}(v_2)
  & = (\alpha_1(v_1) \otimes \alpha_2(v_1)) \ctimes (\alpha_1(v_2) \otimes \alpha_2(v_2))
     = (\alpha_1(v_1) \ctimes \alpha_1(v_2))\otimes (\alpha_2(v_1) \ctimes \alpha_2(v_2))
\\
  & = (s\; \alpha_1(v_2) \ctimes \alpha_1(v_1)) \otimes (s\; \alpha_2(v_2) \ctimes \alpha_2(v_1))
\\
  & = s^2\; (\alpha_1(v_2) \ctimes \alpha_1(v_1))\otimes (\alpha_2(v_2) \ctimes \alpha_2(v_1))
\\
  & = (\pm 1)^2\; (\alpha_1(v_2) \otimes \alpha_2(v_2)) \ctimes (\alpha_1(v_1) \otimes \alpha_2(v_1))
\\
  & =  \alpha_{1\otimes2}(v_2) \ctimes \alpha_{1\otimes2}(v_1).
\end{align*}
\noindent  This means that all
observables in the image $O \equiv \alpha_{1\otimes2}(V)$ of
$\alpha_{1\otimes2}$ pairwise commute.

With $C = \alpha_{1\otimes2}(H)$, the quantum configuration $(O,C)$ is therefore
commutative. By~\cite[Proposition 8]{Arkhipov2012}, it is non-contextual. Let
$a_{1\otimes2}$ be a classical assignment satisfying all sign constraints of
$(O,C)$, and let $a_2$ be the classical assignment of $V$ defined by $a_2(v) =
a_{1\otimes2}(v) a_1(v)$.

Then, by the fact that for all hyperedges $h \in H$,
$\prod_{v\in h}a_2(v) = \prod_{v\in h}(a_{1\otimes2}(v) a_1(v))$, we obtain that
the classical assignment $a_2$ satisfies for $\alpha_2$ exactly the same
hyperedges as $a_1$ for $\alpha_1$.
\end{proof}

\Cref{figDoilyAlgo} illustrates the operational aspect of~\Cref{assignmentCopy}
and its proof, as a way to transfer a classical assignment from a given Pauli
assignment to another one with the same structure. Assume we already know a
classical assignment $a_1$ reaching the contextuality degree in the 3-qubit
doily in~\Cref{figDoilyAlgo:a}. \Cref{figDoilyAlgo:c} shows the tensor product
of the first two doilies, for which a non-contextual solution $a_{1\otimes2}$ is
easily computed. Finally, the product of the classical assignment $a_1$
of~\Cref{figDoilyAlgo:a} and the solution $a_{1\otimes2}$
of~\Cref{figDoilyAlgo:c} provides an optimal classical assignment $a_2$ for the
4-qubit doily in~\Cref{figDoilyAlgo:b}, with the same subset of satisfied
hyperedges, so the same contextuality degree, as generalized in the following
theorem.

\begin{theorem}
\label{mainResult}
Let $(V,H,G)$ be a hypergram. Then all Pauli assignments of $(V,H,G)$ have
the same contextuality degree and noncontextual bound.
\end{theorem}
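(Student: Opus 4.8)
The plan is to deduce the theorem almost immediately from \Cref{assignmentCopy}, which is the real engine of the argument: it already guarantees that whatever set of hyperedges a classical assignment satisfies for one Pauli assignment, the \emph{same} set can be satisfied by some classical assignment for any other Pauli assignment of the same hypergram. Since the contextuality degree is by definition the minimal number of unsatisfied hyperedges over all classical assignments, and since the family of satisfiable hyperedge-sets is thus common to every Pauli assignment, the minima must agree.

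First I would record that the number of contexts is an invariant of the hypergram. Because a Pauli assignment $\alpha$ is injective on $V$, distinct hyperedges of $H$ are sent to distinct contexts of $C = \alpha(H)$, so $|C| = |H|$ for every Pauli assignment. By the relation $b = |C| - 2d$ established in \Cref{contextualitySec}, the noncontextual bound is then determined by the degree, and it suffices to prove that all Pauli assignments share the same degree $d$. Next, given two Pauli assignments $\alpha_1$ and $\alpha_2$ with degrees $d_1$ and $d_2$, I would pick a classical assignment $a_1$ realizing $d_1$, so that its set $S$ of hyperedges satisfied for $\alpha_1$ has cardinality $|H| - d_1$. Applying \Cref{assignmentCopy} yields a classical assignment $a_2$ satisfying exactly $S$ for $\alpha_2$; this leaves $|H| - |S| = d_1$ hyperedges unsatisfied, whence $d_2 \le d_1$. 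Exchanging the roles of $\alpha_1$ and $\alpha_2$ gives the reverse inequality, so $d_1 = d_2$ and, consequently, the noncontextual bounds coincide as well.

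At this level of abstraction there is no genuine obstacle: the theorem is a short corollary once \Cref{assignmentCopy} is in hand, and all the substantive work---the tensor-product construction that cancels anticommutations, together with the non-contextuality of commutative configurations from \Cref{commutNCprop}---has already been carried out upstream. The only point deserving a second's attention is the symmetry step: one must check that the hypotheses of \Cref{assignmentCopy} are genuinely symmetric in $\alpha_1$ and $\alpha_2$, so that it may legitimately be invoked in both directions. This is immediate, since the lemma quantifies over an arbitrary ordered pair of Pauli assignments of the hypergram, with neither playing a privileged role.
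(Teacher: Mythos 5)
Your proposal is correct and follows essentially the same route as the paper: both deduce the theorem from \Cref{assignmentCopy} by transferring an optimal classical assignment from $\alpha_1$ to $\alpha_2$ to get $d_2 \le d_1$, then swapping roles for the reverse inequality, and both conclude the statement about the noncontextual bound from the linear relation $b = |C| - 2d$. Your added remark that injectivity gives $|C| = |H|$ is a small explicit check the paper leaves implicit, but it does not change the argument.
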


\begin{proof}
When $(V,H,G)$ admits no Pauli assignment, the theorem trivially holds.
Otherwise, let $\alpha_1$ be any Pauli assignment of $(V,H,G)$, with the
contextuality degree $d_1$, and let $a_1$ be a classical assignment of
$\alpha_1$ for this contextuality degree $d_1$, i.\,e., at the minimal Hamming
distance $d_1$ from $\alpha_1$. Let $\alpha_2$ be another Pauli assignment whose
contextuality degree $d_2$ is unknown. By~\Cref{assignmentCopy}, we know that
there is a classical assignment $a_2$ with the same set of unsatisfied
hyperedges for $\alpha_2$ as $a_1$ for $\alpha_1$, and thus at the same Hamming
distance $d_1$ from $\alpha_2$, which means that the contextuality degree $d_2$
of $\alpha_2$ is at most $d_1$. The same reasoning with $\alpha_1$ and
$\alpha_2$ exchanged entails that $d_1$ is at most $d_2$, so $d_1 = d_2$.

The noncontextual bound $b$ being related to $d$ by the linear
relation~\eqref{bdsum}, $\alpha_1$ and $\alpha_2$ also have the same
noncontextual bound.
\end{proof}

\section{Assignability}
\label{seCcns}

By \Cref{mainResult} all Pauli assignments $\alpha$ of a hypergram have the same
contextuality degree. However, a given hypergram $(V,H,G)$ does not necessarily
admit a Pauli assignment. If it does, it is said to be
\emph{(Pauli-)assignable}. After providing a counterexample and introducing some
definitions and notations, we establish in~\Cref{vecCnsQAprop,graphCnsQA} a
necessary and sufficient condition on $H$ and $G$ for the assignability of
$(V,H,G)$. When this \emph{assignability condition} is satisfied, the algorithm
presented in~\Cref{labelingSec} efficiently computes such a Pauli assignment
$\alpha$, used in~\Cref{justifSec} to complete the proof of~\Cref{vecCnsQAprop}.
The algorithmic complexity of this algorithm is discussed
in~\Cref{algoComplexitySec}. Finally, \Cref{abstractCdegDef} proposes a
definition of contextuality degree for any assignable hypergram.

From here, we superimpose to the (hyper)graph point of view of the previous
sections the algebraic point of view promoted by algebraic graph
theory~\cite{GR01}, through the following definitions and identifications. For
this purpose, the hyperedges in the set $H$ of a context hypergraph $(V,H)$ are
assumed to be numbered from $1$ to $|H|$ in a arbitrary but fixed order.

\begin{definition}
The \emph{context matrix} $C(H) \in \mathbb{F}_2^{|H|\times|V|}$ of the context
hypergraph $(V,H)$ is its incidence matrix, defined by $C(H)_{k,v} = 1$ if the
vertex $v \in V =\{1,\ldots,|V|\}$ is in the $k$-th hyperedge ($1 \leq k \leq
|H|$), and $0$ otherwise. The \emph{anticommutation matrix} $A(G) \in
\mathbb{F}_2^{|V|\times|V|}$ of the anticommutation graph $(V,G)$ is its
adjacency matrix, defined as the symmetric matrix such that $A(G)_{i,j} = 1$ if
$\{i,j\} \in G$, and $0$ otherwise.
\end{definition}

Since the anticommutation graph $(V,G)$ is loopless, all the diagonal entries of
$A(G)$ are equal to zero.

To lighten the notations (for instance in~\Cref{hypergramSex}
and~\Cref{vecCnsQAprop}) and without risk of confusion, we adopt the following
identification conventions: we also designate by $H$ the context matrix $C(H)$
and by $G$ the anticommutation matrix $A(G)$, whenever it is clear from the
context whether we are talking about a matrix or a set.

\begin{example}\label{hypergramSex}
Consider the hypergram $S = (V,H,G)$ with $V = \{v_1,v_2,v_3,v_4,v_5\}$, $H =
\{\{v_1,v_2,v_3\}$, $\{v_1,v_4,v_5\}\}$ and $G = \{\{v_3,v_5\}\}$.
Its context matrix is the incidence matrix
\begin{equation*}
H = \begin{pmatrix}
1 & 1 & 1 & 0 & 0 \\
1 & 0 & 0 & 1 & 1
\end{pmatrix}
\end{equation*}
and its anticommutation matrix is the adjacency matrix
\begin{equation*}
G = \begin{pmatrix}
0 & 0 & 0 & 0 & 0 \\
0 & 0 & 0 & 0 & 0 \\
0 & 0 & 0 & 0 & \textbf{1} \\
0 & 0 & 0 & 0 & 0 \\
0 & 0 & \textbf{1} & 0 & 0
\end{pmatrix}.
\end{equation*}
The two $1$s in \textbf{bold} come from the anticommutation between $v_3$ and
$v_5$ in the set $G$. 
\end{example}

The following lemma from~\cite{TLC22v2} is a key argument in the
proof of~\Cref{vecCnsQAprop}. It is reproduced here (with a slight adaptation to
our notations) in order to make the paper self-contained. The \emph{Gram matrix}
of the sequence of vectors $s_1, \dots, s_m \in \mathbb{F}_2^{2n}$ with respect
to the symplectic form $\innerproduct{.}{.}$ is the $m \times m$ matrix whose
$(i,j)$ entry $G_{i,j}$ is $\innerproduct{s_i}{s_j}$.

\begin{lemma}[{\cite[Lemma 3]{TLC22v2}}]
\label{lemma3tlc} 
Let $s_1, \dots, s_m \in \mathbb{F}_2^{2n}$ for some positive integer $n$, and
let $G$ be their  Gram matrix with respect to the symplectic form
$\innerproduct{.}{.}$, with rows $r_1, \dots, r_m$. If a subset $\{r_{i_1},
\dots, r_{i_j}\}$ of the rows is linearly independent, then the corresponding
set of vectors $\{s_{i_1}, \dots, s_{i_j}\}$ is also linearly independent.
\end{lemma}

\begin{theorem}
\label{vecCnsQAprop}
A hypergram $(V,H,G)$ admits a Pauli assignment if and only if
\begin{align}
\label{matrixCNS}
H \ctimes G = 0.
\end{align} 
\end{theorem}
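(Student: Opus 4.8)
The plan is to translate both defining conditions of a Pauli assignment into linear algebra over $\mathbb{F}_2$ through the encoding $\psi$, and then to read off~\eqref{matrixCNS} as a compatibility condition between the two resulting matrix equations. Fix $n$ and collect the symplectic representatives of a candidate assignment into a matrix $M \in \mathbb{F}_2^{|V|\times 2n}$ whose $v$-th row is $\psi(\alpha(v))$. Let $J \in \mathbb{F}_2^{2n\times 2n}$ be the block-diagonal Gram matrix of the symplectic form~\eqref{symplf}, so that $\innerproduct{x}{y} = x^{T}\ctimes J\ctimes y$. Because the symplectic form is alternating and the diagonal of $G$ vanishes, the commutation condition is equivalent to the single matrix identity $M\ctimes J\ctimes M^{T} = G$, while the product condition $\sum_{v\in h}\psi(\alpha(v)) = 0$ for every hyperedge $h$ is equivalent to $H\ctimes M = 0$ (here $H$ denotes the context matrix). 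Thus a Pauli assignment on $n$ qubits exists exactly when such an $M$ exists, with rows nonzero and pairwise distinct.

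The forward direction is then immediate: given an assignment with matrix $M$, combining the two identities yields $H\ctimes G = H\ctimes(M\ctimes J\ctimes M^{T}) = (H\ctimes M)\ctimes J\ctimes M^{T} = 0$, which is~\eqref{matrixCNS}. So the condition is necessary.

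For the converse, assume $H\ctimes G = 0$. Since $G$ is symmetric this says $\operatorname{rowspace}(H) \subseteq \ker G$, and $\ker G$ is precisely the radical of the alternating form $G$ on $\mathbb{F}_2^{|V|}$. The form therefore descends to a \emph{non-degenerate} alternating form on $\mathbb{F}_2^{|V|}/\ker G$, whose dimension equals $\operatorname{rank}(G) = 2n$ for some integer $n$; by the symplectic basis theorem this quotient is isometric to $(\mathbb{F}_2^{2n}, J)$. Composing the quotient map with such an isometry produces a linear map $\phi : \mathbb{F}_2^{|V|} \to \mathbb{F}_2^{2n}$ with $\ker\phi = \ker G$ and $\innerproduct{\phi(x)}{\phi(y)} = x^{T}\ctimes G\ctimes y$. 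Taking $M$ to have $v$-th row $\phi(e_v)$, where the $e_v$ are the standard basis vectors of $\mathbb{F}_2^{|V|}$, gives $M\ctimes J\ctimes M^{T} = G$, and $H\ctimes M = 0$ holds because each row of $H$ lies in $\operatorname{rowspace}(H) \subseteq \ker\phi$.

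It remains to upgrade this linear datum into a genuine assignment, and this is where the real work lies. Each row is pulled back to the phase-free observable $\alpha(v) = \psi^{-1}(\phi(e_v))$; since the observables of any hyperedge pairwise commute (the commutation condition together with $G \subseteq \text{cplt}(H)$) and are Hermitian with square $I^{\otimes n}$, their product is a Hermitian scalar multiple of $I^{\otimes n}$ and hence equals $\pm I^{\otimes n}$, so the product condition holds with no phase adjustment. More delicately, $\alpha$ must be injective and identity-avoiding: $\alpha(v) = I^{\otimes n}$ would force $e_v \in \ker G$, i.e.\ $v$ isolated in the anticommutation graph, while $\alpha(i) = \alpha(j)$ would force $e_i + e_j \in \ker G$, i.e.\ $i$ and $j$ non-adjacent with the same neighborhood in $G$; both are excluded precisely because the anticommutation graph of a hypergram is required to be \emph{reduced}. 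The main obstacle is thus the backward construction, namely realizing the abstract form $G$ by honest symplectic vectors that simultaneously obey the context-dependency relations of $H$; it is resolved by the observation that $H\ctimes G = 0$ says exactly that those relations live inside the radical of $G$, so the quotient-by-radical map $\phi$ produces a compatible $M$, and the reduced hypothesis secures distinctness.
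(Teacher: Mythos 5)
Your proof is correct, and its necessity half coincides in substance with the paper's: the paper outsources the identity $\sum_{i\in h}G_{i,j}=0$ to a cited proposition of Trandafir et al.\ on valid Gram matrices, whereas you obtain it directly as $H\ctimes G=(H\ctimes M)\ctimes J\ctimes M^{T}=0$ — same content, yours more self-contained. The sufficiency half, however, takes a genuinely different route. The paper is constructive: it runs an explicit Gram-factorization loop on $G$ (\Cref{alg:findAssignment}, after Godsil--Royle), proves a loop invariant to secure the commutation condition, separately establishes $\mathrm{rk}(G)=2n$ (\Cref{rankProp}), and then derives the product condition by pairing $\sum_{v\in h}\alpha(v)$ against a basis of $\mathbb{F}_2^{2n}$ extracted from the columns of $G$ (\Cref{productAlphaProp}). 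You instead invoke the classification of alternating forms over $\mathbb{F}_2$: quotient by the radical $\ker G$, identify the nondegenerate quotient with the standard symplectic space, and note that $H\ctimes G=0$ places the rows of $H$ inside the radical, so $H\ctimes M=0$ falls out of the factorization of $\phi$ through the quotient and the product condition needs no separate lemma. Both arguments handle injectivity and identity-avoidance identically via the reduced hypothesis on $(V,G)$, and your Hermitian-product observation correctly disposes of the residual $\pm\mathrm{i}$ phase. What your route buys is brevity and conceptual clarity — the entire converse is one application of the symplectic basis theorem; what the paper's route buys is an actual $O(|V|^{3})$ algorithm producing the assignment, which it needs downstream (\Cref{labelingSec,abstractCdegDef}) and which your existence argument does not by itself supply.
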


In~\Cref{matrixCNS}, called the \emph{(matricial) assignability condition}, the
dot is the matrix product and its right-hand side $0$ is the $|H|\times|V|$ zero
matrix. In other words, the condition is that each column of the matrix $G$ is
in the null space $\text{ker}(H)$ of the matrix $H$.

\addtocounter{example}{-1}
\begin{example}[continued] The fifth column of $G$ in our example is the vector
$(0~0~1~0~0)^T$. Since $H \ctimes (0~0~1~0~0)^T = (1~0)^T \neq (0~0)^T$, the
hypergram $S$ is not assignable.
\end{example}

\begin{proof}
First, we show that each Pauli assignment for $(V,H,G)$ satisfies the
assignability condition~(\ref{matrixCNS}).

Let $\alpha$ be an $n$-qubit Pauli assignment of $(V,H,G)$, with $V$ assumed to
be $\{1, 2, \ldots, |V|\}$ here. The commutation condition $G_{i,j} =
\innerproduct{\psi(\alpha(i))}{\psi(\alpha(j))}$ for $\alpha$ is equivalent to
the fact that $G$ is the Gram matrix of $\psi \circ \alpha$, by
definition of a Gram matrix. On the other hand, the commutation and product
conditions for $\alpha$ entail that $\psi \circ \alpha : V \rightarrow
\mathbb{F}_2^{2n}$ and $G$ satisfy the hypotheses of \Cref{lemma3tlc}, which
provides the consequence that $G$ is a valid Gram matrix for the hypergraph
$(V,H)$, as defined in~\cite[page 8]{TLC22v2}. In particular, the following
second condition for a valid Gram matrix holds, for all hyperedges $h$ in $H$
and all vertices $1 \leq v \leq |V|$:
\begin{align}
\sum_{i~\in~h} G_{i,v} = 0. \label{anticommMatrixColumnHyperedgeZero}
\end{align}
In matricial form this system of linear equations is~\Cref{matrixCNS}.

Conversely, the fact that~\Cref{matrixCNS} implies the existence of a Pauli
assignment is justified after presenting the algorithm in~\Cref{labelingSec},
which constructs such a Pauli assignment $\alpha$ of $(V,H,G)$
from the anticommutation matrix $G$.
\end{proof}

It is sometimes more convenient and more pictorial to express the assignability
condition by using the language of graphs, as in the following theorem.

\begin{theorem}
\label{graphCnsQA}
A hypergram $(V,H,G)$ admits a Pauli assignment if and only if, for each vertex
$v \in V$, each hyperedge $h \in H$ contains an even number of vertices which
are adjacent to $v$ in the graph $(V,G)$.
\end{theorem}

In other words, this \emph{graphical assignability condition} requires that each
vertex anticommutes with an even number of vertices of each hyperedge. Since all
the vertices in a hyperedge mutually commute, it is sufficient to consider the
vertices not belonging to the hyperedge.

\begin{proof}
\Cref{matrixCNS} is an algebraic form of the graphical assignability condition.
Indeed, consider the entry $(H \ctimes G)_{h,v}$ at row $h$ and column $v$ in
the product of the matrices $H$ and $G$. Each $1$ in this sum is the product of
a $1$ entry in the matrix $H$, meaning that some vertex $i$ is in the hyperedge
$h$, and a $1$ entry in the matrix $G$, meaning that this vertex $i$
anticommutes with $v$. The sum of these $1$s as natural numbers would be the
number of such vertices $i$ in $h$ anticommuting with $v$. The sum being
computed over $\mathbb{F}_2$, it is zero as required
in~\Cref{anticommMatrixColumnHyperedgeZero} if and only if this number is even.
\end{proof}

\addtocounter{example}{-1}
\begin{example}[continued] In our example where $G = \{\{v_3,v_5\}\}$ the vertex
$v_5$ anticommutes with $v_3$ and commutes with $v_1$ and $v_2$, so it
anticommutes only with one vertex in the first hyperedge $\{v_1,v_2,v_3\}$.
By~\Cref{graphCnsQA}, the hypergram $S$ is not assignable.
\end{example}

\begin{corollary}
\label{nbVerticesNullityHineq}
For all assignable hypergrams $(V,H,G)$,
\begin{align}
\label{ineqRkKer}
|V| \leq 2^{\text{rk}(\text{ker}(H))}-1.
\end{align}
\end{corollary}

\begin{proof}
Direct consequence of the assignability condition and the fact that $(V,G)$ is a
reduced graph: $G$ has no line of zeros and no duplicated columns. So, each
column of $G$ is a distinct vector in the null space of $H$. The lhs of the
inequality~(\ref{ineqRkKer}) is the number of columns of $G$. Its rhs is the
number of non-null vectors in the kernel space of $H$.
\end{proof}

\subsection{Pauli-labeling Algorithm}
\label{labelingSec}

Let $(V,H,G)$ be a hypergram whose set of vertices $V =\{1,\ldots,|V|\}$ is
totally ordered by $<$. From $V$ and the anticommutation matrix $G$,
\Cref{alg:findAssignment} modifies a copy $B$ of the input matrix $G$ until
reaching the null matrix, as in the algorithm left implicit in the proof of
Lemma 8.9.3 in~\cite{GR01}. Moreover, \Cref{alg:findAssignment} computes and
returns a function $\alpha : V \rightarrow \{I,X,Y,Z\}^{\otimes n}$ that labels
all the vertices in $V$ with $n$-qubit observables. It also returns the number
$n$ of qubits in these labels.

\begin{algorithm}[H]
  \caption{Pauli Assignment from an Anticommutation Matrix $G$ on the Set of Vertices $V$}
  \begin{algorithmic}[1]
  \Function{PauliAssignmentFromAnticommutations}{$V,G$}
      \State $n \gets 0$ \label{algFindAssignmentInit1}
      \State $B \gets G$
      \While{$i,j \gets \Call{FindOverdiagonalOne}{B}$} \label{algFindAssignmentWhile}
          \State $n \gets n+1$ \label{nUpdate}
          \For{$k \in V$} \label{beginFor}
            \State $\alpha(k)_{n} \gets \psi^{-1}\left(B_{k,i},B_{k,j}\right)$ \label{alphaNupdateB}
          \EndFor
          \State $B \gets B + B \ctimes e_i \ctimes (B \ctimes e_j)^T + B \ctimes e_j \ctimes (B \ctimes e_i)^T$ \label{matrixUpdateB1}
      \EndWhile
      \State \Return $\alpha, n$
  \EndFunction
  \end{algorithmic}
  \label{alg:findAssignment}
\end{algorithm}

Calling the function \textsc{FindOverdiagonalOne} on
Line~\ref{algFindAssignmentWhile} either returns vertices $i$ and $j$ such that
$i < j$ and $B_{i,j} = 1$, or \textsc{false} when no such pair of vertices
exists, which ends the loop. For any vertex $k \in V$ and $1 \leq s \leq t \leq
n$, let $\alpha(k)_{s}$ denote the $s$-th qubit of $\alpha(k)$ and
$\alpha(k)_{s..t}$ denote its sequence of qubits from the $s$-th one to the
$t$-th one included. For $n \geq 1$ the assignment on Line~\ref{alphaNupdateB}
computes at the $n$-th iteration of~\Cref{alg:findAssignment} the $n$-th Pauli
matrix $\alpha(k)_{n}$ of the label $\alpha(k)$ of all vertices $k \in V$, by
using the inverse $\psi^{-1}$ of the encoding function $\psi$ defined by
(\ref{paulipts}).

For $m \in V$, let $e_m$ denote the $m$-th standard basis vector. Then $B
\ctimes e_m$ is the $m$-th column of $B$ and $e_l^T \ctimes B \ctimes e_m =
B_{l,m}$ is its entry at row $l$ in this column. On Line~\ref{matrixUpdateB1}
the $i$-th and $j$-th columns of the matrix $B$ are used to add zeros in $B$, as
detailed in~\Cref{commutationCondSec}.

\subsection{Justification}
\label{justifSec}

When~\Cref{matrixCNS} holds, the present section shows that the labeling
$\alpha$ returned by \Call{PauliAssignmentFromAnticommutations}{$V,G$} is a
Pauli assignment of the hypergram $(V,H,G)$, thus completing the proof
of~\Cref{vecCnsQAprop}. More precisely, \Cref{commutationCondSec} justifies the
commutation condition, \Cref{rangeSec} justifies that the labels according to
$\alpha$ are pairwise distinct and different from the identity and
\Cref{prodCondSec} justifies the product condition under a condition on the rank
of $G$ proved in~\Cref{rankSec}.

In all that follows, when there is no risk of confusion, the function $\psi$ is
often omitted, for instance when writing $\innerproduct{\alpha(l)}{\alpha(m)}$
instead of $\innerproduct{\psi(\alpha(l))}{\psi(\alpha(m))}$ in~\Cref{inv}.

\subsubsection{Commutation condition}
\label{commutationCondSec}

The following lemma provides a key argument for the commutation condition.
A loop invariant is a property which is true before the loop and is preserved by
each iteration of the loop. Consequently, it also holds after the loop.

\begin{lemma}
\label{invariantProp}
The formula
\begin{align}
\label{inv}
\forall l,m \in V.\ B_{l,m} = G_{l,m}+\innerproduct{\alpha(l)}{\alpha(m)}
\end{align}
is an invariant for the loop of \Cref{alg:findAssignment}.
\end{lemma}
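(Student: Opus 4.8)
The plan is to argue by induction on the number of completed loop iterations, which coincides with the current value of $n$. First I would record the two entrywise facts that make the bookkeeping line up. Writing $e_i,e_j$ for the standard basis vectors, the outer products on Line~\ref{matrixUpdateB1} have entries $\left(B \ctimes e_i \ctimes (B \ctimes e_j)^T\right)_{l,m} = B_{l,i}B_{m,j}$ and $\left(B \ctimes e_j \ctimes (B \ctimes e_i)^T\right)_{l,m} = B_{l,j}B_{m,i}$, so the update reads entrywise as $B'_{l,m} = B_{l,m} + B_{l,i}B_{m,j} + B_{l,j}B_{m,i}$. On the other side, since Line~\ref{alphaNupdateB} sets $\psi(\alpha(k)_n) = (B_{k,i},B_{k,j})$ using the pre-update entries of $B$, the symplectic form~\eqref{symplf} restricted to the freshly appended $n$-th qubit contributes exactly $B_{l,i}B_{m,j} + B_{l,j}B_{m,i}$ to $\innerproduct{\alpha(l)}{\alpha(m)}$. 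These two quantities being identical is the crux of the whole argument.

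For the base case, before entering the loop we have $n = 0$ and $B = G$, while $\alpha(l)$ and $\alpha(m)$ are empty tensors, so $\innerproduct{\alpha(l)}{\alpha(m)} = 0$ as an empty sum and the invariant reduces to the initialization $B = G$.

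For the inductive step, I would assume the invariant holds at the top of an iteration, with current matrix $B$ and partial labeling $\alpha$. The iteration selects $i<j$ with $B_{i,j}=1$, appends the $n$-th qubit $\psi^{-1}(B_{k,i},B_{k,j})$ to each $\alpha(k)$ to produce the extended labeling $\alpha'$, and replaces $B$ by $B'$. Combining the two facts above with the induction hypothesis gives, for all $l,m \in V$, $B'_{l,m} = B_{l,m} + B_{l,i}B_{m,j} + B_{l,j}B_{m,i} = G_{l,m} + \innerproduct{\alpha(l)}{\alpha(m)} + B_{l,i}B_{m,j} + B_{l,j}B_{m,i} = G_{l,m} + \innerproduct{\alpha'(l)}{\alpha'(m)}$, which is precisely the invariant for the next iteration. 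The whole proof is thus a telescoping of the single entrywise identity identified above.

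I expect the only delicate point to be the alignment of indices and the $\mathbb{F}_2$ arithmetic: one must verify that the entries $B_{k,i}, B_{k,j}$ read on Line~\ref{alphaNupdateB} to build the $n$-th qubit are the \emph{same} pre-update entries used in the outer-product update on Line~\ref{matrixUpdateB1}, and that the per-qubit pairing from~\eqref{symplf} yields exactly the symmetric cross-term $B_{l,i}B_{m,j}+B_{l,j}B_{m,i}$ over $\mathbb{F}_2$, with no leftover diagonal term. Once this matching is confirmed, the induction closes immediately.
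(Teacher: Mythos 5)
Your proof is correct and follows essentially the same route as the paper's: an induction over loop iterations whose inductive step matches the entrywise effect of the rank-two update $B_{l,i}B_{m,j}+B_{l,j}B_{m,i}$ against the contribution of the freshly appended qubit to the symplectic form~\eqref{symplf}, with the base case reducing to $B=G$ and an empty inner product. The delicate point you flag (that Line~\ref{alphaNupdateB} reads the pre-update entries of $B$, the same ones appearing in the outer products of Line~\ref{matrixUpdateB1}) is exactly the alignment the paper makes explicit via its equation~(\ref{psiBeq}).
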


\begin{proof}
For $n \geq 1$, let $B^{(n-1)}$ denote the value of the matrix $B$ at the
beginning of the $n$-th iteration of \Cref{alg:findAssignment}, just before
Line~\ref{nUpdate}. Consequently, $B^{(n)}$ denotes the value of $B$ at the end
of the $n$-th iteration, just after Line~\ref{matrixUpdateB1}. In particular,
$B^{(0)} = G$.

The notation $\alpha(l)$ in~(\ref{inv}) stands for $\alpha(l)_{1..n}$ at
Line~\ref{matrixUpdateB1}. Its value when $n = 0$ can be chosen constant, so
that $\innerproduct{\alpha(l)}{\alpha(m)} = 0$ before the loop. Thus (\ref{inv})
is initially true, because $B = G$ before the loop. It remains to prove that
(\ref{inv}) is preserved by the assignment on Line~\ref{matrixUpdateB1}, under
the assumption
\begin{align}
\alpha(k)_{n} = \psi^{-1}\left(B^{(n-1)}_{k,i},B^{(n-1)}_{k,j}\right)
\end{align}
coming from Line~\ref{alphaNupdateB}, equivalent to 
\begin{align}
\label{psiBeq}
\psi(\alpha(k))_{2n-1} = B^{(n-1)}_{k,i} \qquad \text{ and } \qquad \psi(\alpha(k))_{2n} = B^{(n-1)}_{k,j}
\end{align}
for all vertices $k$. This preservation is justified by the following sequence
of equalities:
\begin{align*}
B_{l,m}^{(n)}
& = B_{l,m}^{(n-1)} + \left(B^{(n-1)} \ctimes e_i \ctimes (B^{(n-1)} \ctimes e_j)^T + B^{(n-1)} \ctimes e_j \ctimes (B^{(n-1)} \ctimes e_i)^T\right)_{l,m}
\\
& = B_{l,m}^{(n-1)} + \left(B^{(n-1)} \ctimes e_i \ctimes (B^{(n-1)} \ctimes e_j)^T\right)_{l,m} + \left(B^{(n-1)} \ctimes e_j \ctimes (B^{(n-1)} \ctimes e_i)^T\right)_{l,m}
\\
& = B_{l,m}^{(n-1)} + B^{(n-1)}_{l,i} B^{(n-1)}_{m,j} + B^{(n-1)}_{l,j} B^{(n-1)}_{m,i}
\\
& = B_{l,m}^{(n-1)} + \psi(\alpha(l))_{2n-1}\psi(\alpha(m))_{2n} + \psi(\alpha(l))_{2n}\psi(\alpha(m))_{2n-1} & \text{by}~(\ref{psiBeq})
\\
& = B_{l,m}^{(n-1)} + \innerproduct{\psi(\alpha(l)_n)}{\psi(\alpha(m)_n)} & \text{by}~(\ref{symplf})
\\
& = B_{l,m}^{(0)} + \innerproduct{\psi(\alpha(l)_1)}{\psi(\alpha(m)_1)} + \ldots + \innerproduct{\psi(\alpha(l)_n)}{\psi(\alpha(m)_n)}
  & \text{by induction on } n
\\
& = G_{l,m} + \innerproduct{\psi(\alpha(l)_{1..n})}{\psi(\alpha(m)_{1..n})},
\end{align*}
so (\ref{inv}) is an invariant of the loop.
\end{proof}

It is easy to check similarly that the symmetry of $B$ and its diagonal of zeros
are two other loop invariants. All these invariants still hold after the loop,
together with the negation of the loop condition. So, at the end of the
algorithm, $B$ is the null matrix and thus $G_{l,m} =
\innerproduct{\alpha(l)}{\alpha(m)}$ for all vertices $l$ and $m$, meaning that
the returned labeling $\alpha : V \rightarrow \{I,X,Y,Z\}^{\otimes n}$ satisfies
the commutation condition.

\subsubsection{Rank and number of qubits}
\label{rankSec}

The following lemma establishes the relation $r=2n$ between the rank $r$ of $G$
and the number $n$ of qubits of the labeling generated by the algorithm. It
could be justified in one sentence saying that it is the first conclusion of
Theorem 8.10.1 in~\cite{GR01}, but, in order to make the paper self-contained,
we prefer to provide a proof of it, with our notations and more details. This
proof is strongly inspired by that of Lemma 8.9.3 in the same
reference~\cite{GR01}.

\begin{lemma}
\label{rankProp}
The rank of the matrix $G$ is twice the number $n$ of iterations returned by the
application \Call{PauliAssignment\-FromAnticommutations}{$V,G$}
of~\Cref{alg:findAssignment} to $V$ and $G$.
\end{lemma}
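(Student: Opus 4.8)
The plan is to track how the rank of the matrix $B$ changes across iterations and show that each iteration reduces the rank by exactly two, so that after $n$ iterations $B$ becomes the null matrix precisely when $\text{rk}(G) = 2n$. First I would set up the notation from \Cref{invariantProp}: let $B^{(0)} = G$ and $B^{(n)}$ denote the matrix after the $n$-th iteration, recalling that each $B^{(m)}$ is symmetric with a zero diagonal (these are established loop invariants). The update on Line~\ref{matrixUpdateB1} is $B^{(n)} = B^{(n-1)} + c_i c_j^T + c_j c_i^T$, where $c_i = B^{(n-1)} \ctimes e_i$ and $c_j = B^{(n-1)} \ctimes e_j$ are the $i$-th and $j$-th columns of $B^{(n-1)}$, and where the pivot entry satisfies $B^{(n-1)}_{i,j} = 1$.

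The key step is to prove that this rank-two update decreases the rank by exactly $2$ at each iteration. The intuition, following the argument behind Lemma 8.9.3 in~\cite{GR01}, is that over $\mathbb{F}_2$ a symmetric matrix with zero diagonal (an \emph{alternating} bilinear form) decomposes into hyperbolic planes; the chosen pair $(i,j)$ with $B^{(n-1)}_{i,j}=1$ picks out a hyperbolic pair spanning such a plane, and the update projects the form onto the orthogonal complement of that plane. Concretely, I would verify that after the update the $i$-th and $j$-th rows and columns of $B^{(n)}$ become zero: for any $l$, the new $(l,i)$ entry is $B^{(n-1)}_{l,i} + B^{(n-1)}_{l,i}B^{(n-1)}_{j,j} + B^{(n-1)}_{l,j}B^{(n-1)}_{i,j} = B^{(n-1)}_{l,i} + 0 + B^{(n-1)}_{l,j}\cdot 1$, which after the symmetric contributions is designed to cancel; I would carry out this short $\mathbb{F}_2$ computation carefully, using $B^{(n-1)}_{i,i}=B^{(n-1)}_{j,j}=0$ and $B^{(n-1)}_{i,j}=1$. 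Having zeroed out these two coordinates, the image of $B^{(n)}$ lies in the span of the remaining basis vectors, giving $\text{rk}(B^{(n)}) \leq \text{rk}(B^{(n-1)}) - 2$; the reverse inequality follows because the update adds a matrix of rank at most $2$, so $\text{rk}(B^{(n)}) \geq \text{rk}(B^{(n-1)}) - 2$. Together these force the rank to drop by exactly $2$.

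I expect the main obstacle to be the bookkeeping of the rank-exactly-two claim over $\mathbb{F}_2$, specifically confirming that the update genuinely annihilates the $i$-th and $j$-th coordinates rather than merely shrinking the rank by some amount $\leq 2$. The subtlety is that $c_i c_j^T + c_j c_i^T$ could a priori have rank $1$ or $0$ if $c_i$ and $c_j$ were dependent, but the condition $B^{(n-1)}_{i,j}=1$ guarantees $\langle c_i, c_j\rangle$-type nondegeneracy on the plane they span, keeping the rank drop equal to $2$. Once the per-iteration claim is secured, the conclusion is a short induction: since each iteration strictly decreases the rank by $2$ and the loop terminates exactly when $B$ is null (no overdiagonal one remains, and by symmetry plus zero diagonal this means $B = 0$), the number of iterations $n$ satisfies $\text{rk}(G) = \text{rk}(B^{(0)}) = 2n$, as claimed.
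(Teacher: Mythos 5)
Your overall strategy is the same as the paper's: show that each execution of Line~\ref{matrixUpdateB1} drops $\text{rk}(B)$ by exactly two, then conclude from $B^{(n)}=0$ and $B^{(0)}=G$ that $\text{rk}(G)=2n$. Your bound $\text{rk}(B^{(n)}) \geq \text{rk}(B^{(n-1)})-2$ (the update adds a matrix of rank at most two) is fine and matches the paper's column-span argument. The other direction, however, contains a genuine gap. From the fact that the $i$-th and $j$-th rows and columns of $B^{(n)}$ vanish you may only conclude $\text{rk}(B^{(n)}) \leq |V|-2$, not $\text{rk}(B^{(n)}) \leq \text{rk}(B^{(n-1)})-2$; as written the inference is a non sequitur whenever $\text{rk}(B^{(n-1)})<|V|$. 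Some relation between the two matrices is needed, not just a bound on the new one. The paper supplies it through null spaces: writing $A=B^{(n-1)}$, $y=A\ctimes e_i$, $z=A\ctimes e_j$, the symmetry of $A$ gives $y^T\ctimes x=z^T\ctimes x=0$ for every $x$ with $A\ctimes x=0$, hence the null space of $A$ is contained in that of $B^{(n)}$; moreover $e_i$ and $e_j$ lie in the null space of $B^{(n)}$ but no nonzero combination $ae_i+be_j$ lies in that of $A$ (since $(ay+bz)_i=b$ and $(ay+bz)_j=a$, using $A_{i,j}=1$ and the zero diagonal), so the nullity grows by at least two. An equivalent repair on the column side is to note that every column of $B^{(n)}$ is a combination of columns of $B^{(n-1)}$, and that the two coordinate functionals $v\mapsto v_i$ and $v\mapsto v_j$ are linearly independent on the column space of $B^{(n-1)}$ (witnessed by $y$ and $z$), so imposing them cuts that space down by exactly two dimensions. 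Either way, an extra argument beyond what you wrote is required.

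A smaller point: your sketched entry computation is mis-indexed. Over $\mathbb{F}_2$ one has $(c_ic_j^T+c_jc_i^T)_{l,i}=B^{(n-1)}_{l,i}B^{(n-1)}_{i,j}+B^{(n-1)}_{l,j}B^{(n-1)}_{i,i}=B^{(n-1)}_{l,i}$, so the new $(l,i)$ entry is $B^{(n-1)}_{l,i}+B^{(n-1)}_{l,i}=0$; the expression you wrote evaluates to $B^{(n-1)}_{l,i}+B^{(n-1)}_{l,j}$ and does not cancel. The claim itself (those two rows and columns become zero) is true, and the paper uses it too, but only as one ingredient of the null-space argument above.
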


\begin{proof}
It is sufficient to show that each execution of Line~\ref{matrixUpdateB1}
reduces the rank $\text{rk}(B)$ of $B$ by two, since the rank of the final null
matrix $B$ is zero.

Let $A$ be $B$ just before Line~\ref{matrixUpdateB1}, $y = A \ctimes e_i$ and $z
= A \ctimes e_j$ be the $i$-th and $j$-th columns of $A$ and $C = y \ctimes z^T
+ z \ctimes y^T$ be the matrix subtracted to $A$ to update $B$ on
Line~\ref{matrixUpdateB1}. Since $A_{i,j} = 1$ and $A_{i,i} = A_{j,j} = 0$, the
$i$-th column of $C$ is the $i$-th column of $A$, the $j$-th column of $C$ is
the $j$-th column of $A$ and the other columns of $C$ are linear combinations of
these two columns of $A$.

Since $A = B+C$ after Line~\ref{matrixUpdateB1}, the $m$-th column of $A$ is a
linear combination of the $m$-th column of $B$ and the $i$-th and $j$-th columns
$y$ and $z$ of $A$. So, the column space of $A$ is spanned by the union of the
columns of $B$ with the vectors $y$ and $z$, so $\text{rk}(A) \leq
\text{rk}(B)+2$.

For any vector $x$ in the null space of $A$, we have $A \ctimes x = 0$. So,
considering the $i$-th and $j$-th rows of $A$, we have $e_i^T \ctimes A \ctimes
x = 0$ and $e_j^T \ctimes A \ctimes x = 0$. Since $A$ is symmetric ($A = A^T$),
we get $(A \ctimes e_i)^T \ctimes x = 0$ and $(A \ctimes e_j)^T \ctimes x = 0$,
i.\,e., $y^T \ctimes x = z^T \ctimes x = 0$. Consequently, $C \ctimes x = (y
\ctimes z^T + z \ctimes y^T) \ctimes x = y \ctimes (z^T \ctimes x) + z \ctimes
(y^T \ctimes x) = 0$. Since $B$ just after Line~\ref{matrixUpdateB1} is $A+C$,
it comes that $B \ctimes x = 0$ and so the null space of $A$ is included in the
null space of $B$. Moreover, since $B = A-C$, the $i$-th and $j$-th columns $B
\ctimes e_i$ and $B \ctimes e_j$ of $B$ are two columns of zeros. Therefore, the
two independent basis vectors $e_i$ and $e_j$ are in the null space of $B$.
Consequently, $\text{rk}(B) \leq \text{rk}(A)-2$.

Altogether, $\text{rk}(B) = \text{rk}(A)-2$, which ends the proof.
\end{proof}

\subsubsection{Image conditions}
\label{rangeSec}

The anticommutation matrix $G$ is the Gram matrix of the function $\psi \circ
\alpha$. If $\alpha$ is not injective, then there are two distinct vertices $v$
and $v'$ such that $\psi(\alpha(v)) = \psi(\alpha(v'))$ and the corresponding
rows in $G$ are equal. Similarly, if $\alpha(v) = I^{\otimes n}$ for some
vertex $v$, then $\psi(\alpha(v)) = 0 \in \mathbb{F}_2^{2n}$ and the
corresponding row in $G$ is a row of zeros. Since the anticommutation graph
$(V,G)$ is reduced (by definition of a hypergram), its adjacency matrix $G$ has no
duplicated rows and no row of zeros, so $\alpha$ is injective and its image is
included in $\{I,X,Y,Z\}^{\otimes n}~-~\{I^{\otimes n} \}$.

\subsubsection{Product condition}
\label{prodCondSec}

Finally, the product condition is also respected, since $\alpha$ satisfies all
the hypotheses of the following lemma.

\begin{lemma}
\label{productAlphaProp}
Let $(V,H,G)$ be an assignable hypergram with $V =\{1,\ldots,|V|\}$. Let $r$ be
the rank of $G$ and $n = r/2$. Let $\alpha : V \rightarrow \{I,X,Y,Z\}^{\otimes
n}$ be a vertex labeling with $n$-qubit observables satisfying the commutation
condition $\innerproduct{\alpha(i)}{\alpha(j)} = G_{i,j}$ for all vertices $1
\leq i,j \leq |V|$. Then $\alpha$ satisfies the product condition $\prod_{v \in
h} \alpha(v) = \pm I^{\otimes n}$ for all hyperedges $h \in H$.
\end{lemma}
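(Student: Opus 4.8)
The plan is to pass to the symplectic encoding and reduce the product condition to a single statement in $\mathbb{F}_2^{2n}$. For a fixed hyperedge $h \in H$, set $w_h := \sum_{i \in h} \psi(\alpha(i)) \in \mathbb{F}_2^{2n}$. Since the phase-free Pauli part of a product of observables is the $\mathbb{F}_2^{2n}$-sum of their codes, the equation $\prod_{v \in h}\alpha(v) = \zeta\, I^{\otimes n}$ for some phase $\zeta \in \{1,-1,\text{i},-\text{i}\}$ is \emph{equivalent} to $w_h = 0$. So the whole argument splits into two parts: first show $w_h = 0$ (using the assignability condition and the rank hypothesis), then show $\zeta$ is real, i.e. $\zeta = \pm 1$.

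First I would record the commutation hypothesis as a Gram matrix identity. Let $M \in \mathbb{F}_2^{2n \times |V|}$ be the matrix whose $j$-th column is $\psi(\alpha(j))$, and let $J$ be the (invertible) matrix of the symplectic form~(\ref{symplf}). Then $\innerproduct{\alpha(i)}{\alpha(j)} = G_{i,j}$ for all $i,j$ reads $G = M^T \ctimes J \ctimes M$. Because $J$ is invertible, $\text{rk}(G) \le \text{rk}(M) \le 2n$. Since $n = r/2$ with $r = \text{rk}(G)$, we have $\text{rk}(G) = 2n$, forcing $\text{rk}(M) = 2n$: the matrix $M$ has full row rank, so its columns $\{\psi(\alpha(j)) : j \in V\}$ span all of $\mathbb{F}_2^{2n}$.

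Next I would invoke the assignability condition $H \ctimes G = 0$, which assembles exactly the scalar equations~(\ref{anticommMatrixColumnHyperedgeZero}): $\sum_{i \in h} G_{i,j} = 0$ for every hyperedge $h$ and every vertex $j$. Substituting the commutation hypothesis and using bilinearity of the symplectic form yields $\innerproduct{w_h}{\alpha(j)} = \sum_{i \in h} \innerproduct{\alpha(i)}{\alpha(j)} = 0$ for all $j \in V$. Thus $w_h$ is symplectically orthogonal to a spanning set of $\mathbb{F}_2^{2n}$; since~(\ref{symplf}) is non-degenerate, this forces $w_h = 0$, and hence $\prod_{v \in h}\alpha(v) = \zeta\, I^{\otimes n}$ with $\zeta \in \{1,-1,\text{i},-\text{i}\}$.

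It remains to fix the phase. Any two vertices $i,j$ of a common hyperedge $h$ are adjacent in $(V,H)$, so $\{i,j\} \notin \text{cplt}(H) \supseteq G$; thus $G_{i,j} = 0$ and the commutation hypothesis gives $\innerproduct{\alpha(i)}{\alpha(j)} = 0$, i.e. $\alpha(i)$ and $\alpha(j)$ commute. Each $\alpha(v)$ is Hermitian (a tensor product of the Hermitian matrices $I,X,Y,Z$), and a product of pairwise-commuting Hermitian operators is again Hermitian, so $\prod_{v\in h}\alpha(v)$ is Hermitian. A Hermitian scalar multiple $\zeta\, I^{\otimes n}$ of the identity satisfies $\bar{\zeta} = \zeta$, whence $\zeta \in \{1,-1\}$ and the product condition holds. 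The only genuine obstacle is the linear-algebra core of the second and third paragraphs: recognizing that the rank hypothesis $\text{rk}(G) = 2n$ is precisely what makes the image vectors span the whole symplectic space, so that non-degeneracy can convert the column-sum relations~(\ref{anticommMatrixColumnHyperedgeZero}) into $w_h = 0$; the phase step is then routine.
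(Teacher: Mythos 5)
Your proof is correct and follows essentially the same route as the paper's: reduce the product condition to $\sum_{v\in h}\psi(\alpha(v))=0$, show this vector is symplectically orthogonal to a family of images of $\alpha$ spanning $\mathbb{F}_2^{2n}$ (using $H\ctimes G=0$ and $\text{rk}(G)=2n$), and conclude by non-degeneracy of the form. The only differences are minor and self-contained improvements: you obtain the spanning property directly from the Gram identity $G=M^{T}\ctimes J\ctimes M$ rather than citing an external lemma to extract a basis of $2n$ independent image vectors, and you explicitly rule out the phases $\pm\text{i}$ by a Hermiticity argument, a step the paper leaves implicit in its appeal to the correspondence with symplectic polar spaces.
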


\begin{proof}
By the correspondence with symplectic polar spaces (\Cref{symplSec}) it is
equivalent to prove that $\sum_{v \in h} \alpha(v) = 0$ for the null vector $0$
in $\mathbb{F}_2^{2n}$. As already mentioned in the proof
of~\Cref{vecCnsQAprop}, the commutation condition for $\alpha$ entails that $G$
is the Gram matrix of $\alpha$. Let $b = (c_1, \ldots, c_r)$ be composed of $r$
columns of $G$ and forming a basis of the vector space $\text{span}(G)$ spanned
by the columns of $G$. By~\Cref{lemma3tlc} applied to the tuple of vectors
$(\alpha(v))_{v \in V}$, to their Gram matrix $G$ and to the subset of vectors
in $b$, the corresponding vectors $\alpha(v_1)$, \ldots, $\alpha(v_r)$ are
linearly independent vectors in $\mathbb{F}_2^{2n}$. By Lemma~\ref{rankProp}, $r
= 2n = \text{dim}(\mathbb{F}_2^{2n})$, so these vectors form a basis of
$\mathbb{F}_2^{2n}$.

For $1 \leq j \leq r$, we have on the one hand
\begin{align*}
\sum_{v\in h} G_{v,v_j}
= & \sum_{v\in h} \innerproduct{\psi(\alpha(v))}{\psi(\alpha(v_j))} \qquad \text{(by the commutation condition)}
\\
= & \innerproduct{\sum_{v\in h} \psi(\alpha(v))}{\psi(\alpha(v_j))} \qquad \text{(by linearity of the symplectic product)}.
\end{align*}
On the other hand,
$\sum_{v\in h} G_{v,v_j} = 0$
by the assignability condition~(\ref{anticommMatrixColumnHyperedgeZero}). So
\begin{align}
\label{symplProductNull}
\innerproduct{\sum_{v\in h} \alpha(v)}{\alpha(v_j)} & = 0
\end{align}
for all the vectors of the basis $(\alpha(v_j))_{1\leq j \leq r}$, which
is possible only if $\sum_{v\in h} \alpha(v) = 0$.
\end{proof}

\subsection{Algorithmic complexity}
\label{algoComplexitySec}

Multiple methods in~\cite[Theorem 8]{TLC22v2} are suggested for generating
labelings, including searching subgraphs of the graph of the whole symplectic
space, or using backtracking techniques. None of them is polynomial.
However, a recent work~\cite[Theorem 2]{ZPL24} describes a polynomial algorithm
for finding such a labeling, but at the cost of not guaranteeing that the
resulting labeling has the smallest possible number of qubits, since it is equal
to the number of anticommuting pairs in an independent set of vectors of the
Gram matrix. For example, an independent set of the doily being a set of five
vectors, the algorithm will return a labeling with at least five qubits, while
our algorithm returns a labeling with only two qubits.

The algorithm presented in~\Cref{labelingSec} is polynomial, with a complexity
in $O(|V|^3)$, and provides a labeling with the minimal number of qubits. This
is first because the number of iterations of the loop
in~\Cref{alg:findAssignment} is at most $|V|/2$ for a $G$ of full rank. Then,
inside this loop, the assignment of the matrix $B$ on Line~\ref{matrixUpdateB1}
is more costly than the inner loop on Line~\ref{beginFor}, because the matrix
assignment is done in $O(|V|^2)$, while the inner loop is done in $O(|V|)$.

\subsection{Contextuality degree of an assignable hypergram}
\label{abstractCdegDef}

By~\Cref{vecCnsQAprop} all assignable hypergrams admit a Pauli assignment.
By~\Cref{mainResult} all these assignments have the same degree of
contextuality. Putting everything together we propose the following notion of
degree for any assignable hypergram.

\begin{definition}[Contextuality degree of an assignable hypergram]
Let $(V,H,G)$ be an assignable hypergram, $n = \text{rk}(G)/2$ be half the rank
of $G$ (known to be even) and $\alpha$ be the vertex labeling from $V$ to
$\{I,X,Y,Z\}^{\otimes n}~-~\{I^{\otimes n} \}$ computed by the algorithm
\Call{PauliAssignmentFromAnticommutations}{$V,G$}. Let $\text{sgn}_{\alpha} : H
\rightarrow \{-1,1\}$ be its sign function, defined by $\Pi_{v \in h}\,
\alpha(v) = \text{sgn}_{\alpha}(h)\,I^{\otimes n}$ for all hyperedges $h$ in
$H$. The \emph{(contextuality) degree of an assignable hypergram} $(V,H,G)$ is
the minimal Hamming distance between the sign function $\text{sgn}_{\alpha}$ and
the sign function $\text{sgn}_{a}$ of any classical assignment $a : V
\rightarrow \{-1,+1\}$ of its vertices, defined by $\text{sgn}_{a}(h) = \Pi_{v
\in h}~a(v)$ for all hyperedges $h$ in $H$.
\end{definition}

\section{Generalization of former results}
\label{msgdh23Sec}

When the contextuality degree of some $n$-qubit Pauli assignment of some
hypergram is known, the contextuality degree of all $N$-qubit Pauli assignments
of the same hypergram is also known for all the numbers of qubits $N \geq n$,
because~\Cref{mainResult} asserts that all these degrees have the same value.
So, all contextuality degree results established in former work (e.g.,
\cite{MSGDH24,MSGHK24}) only for $n \leq N \leq n'$ for some small number of
qubits $n'$, indeed hold by~\Cref{mainResult} without limit for all $N \geq n$.
Moreover, instead of being obtained as in~\cite{MSGDH24,MSGHK24} after long
computations considering all possible Pauli assignments for all the values of
$N$ in this interval $[n,n']$, they can now be obtained much more efficiently,
by considering only one Pauli assignment of the hypergram they share in common,
with the smallest number of qubits $n$. This section revisits with this larger
point of view several former results about the contextuality degree of
multi-qubit quantum configurations.

Moreover, as announced in~\Cref{ccsSec}, we illustrate here with examples how
the anticommutation relation $G$ added in our framework of hypergrams $(V,H,G)$
to the hypergraph $(V,H)$ of usual quantum configurations opens the door to a
much wider range of cases. To clarify this widening, we classify the quantum
configurations studied in former works~\cite{TLC22v2,DHGMS22,MSGDH24} into two
families. The first family is composed of all the structures whose hypergram
$(V,H,G)$ satisfies $G = \text{cplt}(H)$, where $\text{cplt}(H)$ is the
anticommutation graph of the hypergraph $H$. In other words, in this case, two
observables are in the same context if and only if they commute. The second
family is composed of all the other structures, where $G \subsetneq
\text{cplt}(H)$. In other words, in this case, some commuting pairs of
observables are absent in all contexts.

Each following subsection is devoted to a particular category of quantum
configurations.

\subsection{$1$-spaces}
\label{1spacesSec}

This section is about quantum configurations whose contexts are totally
isotropic subspaces with the projective dimension $1$, also called
\emph{$1$-spaces} or \emph{lines}. Their underlying hypergram belongs to the
first family.

For each number of qubits $n \geq 2$, let $L_{n}$ be the quantum configuration
whose contexts are \textbf{all} the lines of $W_{n}$. For instance, $L_{2}$ is
the 2-qubit doily. The number of observables in $L_{n}$ is $2^{2n} - 1$. Its
number of contexts is
\begin{align}
\frac{(4^{n}-1)(4^{n-1}-1)}{3}.
\end{align}
Its number of negative contexts~\cite{Cab10} is 
\begin{align}
\frac{1}{6} \sum_{c=0}^{n-2}
\sum_{a,b} 3^{2n-a-b-2c}\binom{n}{c} \binom{n-c}{a} \binom{n-c-a}{b}.
\end{align}
For $n \leq 7$, these numbers are respectively given in this order in the
second, third and four column of~\Cref{k1Table}. Its last two columns gathers
the results from Tables 1 and 3 in~\cite{MSGHK24}, up to seven qubits. We
present here neither better bounds for the contextuality degree $d$ nor a
speedup for the computation time of its upper bounds (displayed in the last
column), obtained by the heuristic method presented in~\cite{MSGHK24}, run on a
machine with a 5.4 GHz P-cores and 4.3 GHz E-cores Intel Core i9-13900K
processor. These computations use less than 1.3 Gb out of 64 Gb of RAM. What is
new here is the \textbf{interpretation} of these data, detailed in the following
paragraph.

\begin{table}[hbt!]
\begin{center}
\begin{tabular}{|c|r|r|r|l|l|}
\hline
$n$          & \# obs.      & \# contexts      & \# neg. contexts & Value or bounds for $d$  & Duration\\
\hline
\hline
2            & 15           & 15               & 3                & 3    & $0.01$ s\\
\cline{1-6}
3            & 63           & 315              & 90               & $63$ & $0.1$ s\\
\cline{1-6}
4            & 255          & \np{5355}        & \np{1908}        & $\np{1071} \leq d \leq \np{1575}$ & $0.1$ s\\
\cline{1-6}
5            & \np{1023}    & \np{86955}       & \np{35400}       & $\np{17391} \leq d \leq \np{31479}$ & $1$ s\\
\cline{1-6}
6            & \np{4095}    & \np{1396395}     & \np{615888}      & $\np{279279} \leq d \leq \np{553140}$ & $2$ mn\\
\cline{1-6}
7            & \np{16383}    & \np{22362795}    & \np{10352160}    & $\np{4472559} \leq d \leq \np{9405663}$ & 1 h 34 mn\\
\hline
\end{tabular}
\end{center}
\caption{Dimensions, exact values (for $n = 2,3$) or bounds (for $n \geq 4$) for
the contextuality degree $d$ of quantum configurations in $W_N$ (for all $N \geq
n$) isomorphic to the quantum configuration $L_n$ whose contexts are all the
lines of $W_{n}$.\label{k1Table}}
\end{table}

Each row in~\Cref{k1Table} not only concerns $L_{n}$, but also all the quantum
configurations isomorphic to $L_{n}$ whose contexts are distinct lines of a
symplectic space $W_N$ for some $N \geq n$. Of course, when $N > n$, these
quantum configurations do not contain \textbf{all} the lines of $W_N$, and their
numbers of negative contexts can differ from that of $L_{n}$. However,
\Cref{mainResult} guarantees that all of them have the same contextuality
degree, whose value is either exactly given or bounded in the fifth column
of~\Cref{k1Table}, for $2 \leq n \leq 7$.

It was already known that the contextuality degree of all $n$-qubit doilies is
3, for $n\geq 2$. This was formerly proved by computing this degree for the 12
possible configurations of their negative lines~\cite{MSGDH22}. A direct
consequence of~\Cref{mainResult} is a much simpler proof, which does not rely on
such an enumeration, but justifies only that the contextuality degree of the
2-qubit doily is~3.

In the same way, it is known that the contextuality degree of $L_3$ (all the
3-qubit lines) is 63 and that there is a minimal subset of unsatisfied
hyperedges isomorphic to the split Cayley hexagon of order two~\cite{MSGDH24}.
By~\Cref{mainResult} the contextuality degree of all quantum configurations
isomorphic to $L_3$ (i.\,e., having the same underlying hypergram as it) labeled
by $N$-qubit observables with $N \geq 3$ is also 63. Moreover,
by~\Cref{assignmentCopy}, all these configurations share the same subset of
unsatisfied hyperedges. So, we now know that one of the minimal subsets of
unsatisfied hyperedges in any Pauli assignment of $L_3$ by $N$-qubit observables with
$N \geq 3$ is isomorphic to the split Cayley hexagon of order two.

\subsection{Two-spreads}
\label{2spreadSec}

The case of two-spreads is of interest because all the two-spreads considered up
to now in relation with contextuality~\cite{MSGDH24} are small magic sets whose
underlying hypergram $S_{2s}$ belongs to the second family, as detailed
in~\Cref{ex1}.

For $n \geq 2$, it is known that all $n$-qubit two-spreads are contextual, and
that their contextuality degree is~1~\cite[Proposition~7]{MSGDH24}. The proof of
this proposition in~\cite{MSGDH24} relies on the fact that two-spreads feature
an odd number of negative contexts. The latter fact relies on a careful
inspection of $72 = 6\ctimes 12$ possible configurations of their negative lines,
obtainable by removing one of its 6 spreads of lines from one of the 12 possible
configurations of negative lines in an $n$-qubit doily. Thanks
to~\Cref{mainResult}, we provide here the following much simpler proof of a
similar proposition about two-spreads, more precise by expliciting its
underlying hypergram.

\begin{proposition}
\label{2spreadProp}
The contextuality degree of all $n$-qubit labelings of the two-spread hypergram
$S_{2s}$ is 1.
\end{proposition}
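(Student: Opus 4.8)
The plan is to invoke \Cref{mainResult} to reduce the claim to a single computation. By \Cref{mainResult}, all Pauli assignments of the hypergram $S_{2s}$ share the same contextuality degree, so it suffices to compute that degree for one specific assignment. The natural choice is the explicit $2$-qubit assignment $\alpha_{2s}$ given in the running example, which is already displayed in \Cref{figTwospread}. First I would record its sign function: multiplying the three observables in each of the ten hyperedges of $H_{2s}$ and reading off the $\pm I^{\otimes 2}$ factor, one finds exactly one negative context (the doubled line, whose product is $-II$) and nine positive ones, as already noted in the continued example.

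Next I would show that this degree equals $1$ by establishing the two inequalities $d \geq 1$ and $d \leq 1$. For $d \geq 1$, it suffices to observe that $\alpha_{2s}$ is contextual, i.e. that no classical assignment $a : V_{2s} \to \{-1,+1\}$ can reproduce $\text{sgn}_{\alpha_{2s}}$ on all ten hyperedges. This follows from a parity argument: the product over all hyperedges of $\text{sgn}_a(h)$ equals $\prod_{h}\prod_{v\in h} a(v)$, and since the two-spread is a spread-complement in which every vertex lies in an \emph{even} number of contexts (so each $a(v)$ appears an even number of times overall), this global product is forced to be $+1$ for any classical assignment. But the global product of $\text{sgn}_{\alpha_{2s}}$ is the product of the ten signs, which is $-1$ because there is an odd number (one) of negative contexts. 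Hence at least one hyperedge must disagree, giving $d \geq 1$.

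For the upper bound $d \leq 1$, I would exhibit a classical assignment disagreeing with $\text{sgn}_{\alpha_{2s}}$ on exactly one hyperedge. The cleanest route is to pick the all-ones classical assignment $a \equiv +1$, whose sign function is identically $+1$; this already matches $\alpha_{2s}$ on all nine positive contexts and differs only on the single negative one, so the Hamming distance is exactly $1$. Combining the two bounds yields $d = 1$ for $\alpha_{2s}$, and \Cref{mainResult} propagates this value to all $n$-qubit labelings of $S_{2s}$.

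The main obstacle is not conceptual but bookkeeping: one must correctly verify the sign of each of the ten products from the explicit observables of $\alpha_{2s}$, paying attention to the phase factors arising from the Pauli multiplication rules (e.g. $X\ctimes Y = \mathrm{i}Z$). The crucial structural input that makes the parity argument work is the evenness of the vertex degrees in $H_{2s}$, which I would verify directly from the list of ten hyperedges; this is exactly the feature distinguishing a two-spread (a spread-complement of the doily) and is what guarantees that the number of negative contexts controls contextuality. Once that parity fact and the single explicit computation are in place, the proof is immediate.
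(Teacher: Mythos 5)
Your proposal is correct and follows essentially the same route as the paper: reduce to the explicit $2$-qubit labeling via \Cref{mainResult}, get $d\le 1$ from the single negative context, and get $d\ge 1$ from contextuality of the configuration. The only difference is cosmetic — where the paper simply cites that the two-spread is a magic set and hence contextual, you unpack that citation into the underlying parity argument (every vertex in an even number of contexts, odd number of negative contexts), which is exactly the standard proof of that fact.
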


\begin{proof}
As a point-line geometry, disregarding line signs, any $n$-qubit two-spread is
isomorphic to the two-spread of the 2-qubit doily presented in~\Cref{ex1} and
represented in~\Cref{figTwospread}. The latter contains only one negative line,
so its contextuality degree $d$ is at most 1. Moreover, it is a magic set, which
implies that it is contextual~\cite{HS17}, so $d \geq 1$. Consequently, $d =1$,
and by isomorphism and~\Cref{mainResult}, the contextuality degree of all
$n$-qubit two-spreads whose underlying hypergram is $S_{2s}$ is also 1.
\end{proof}

In order to illustrate the impact of the anticommutation graph on contextuality,
the following example presents a non-contextual two-spread embeddable in $W_3$.

\begin{example}
Let us consider the hypergram $S' = (V_{2s},H_{2s},G')$, variant of the
two-spread hypergram $S_{2s} = (V_{2s},H_{2s},G_{2s})$, with the same underlying
hypergraph $(V_{2s},H_{2s})$ but the anticommutation graph $G' = \{\{1,5\}$,
$\{1,7\}$, $\{1,8\}$, $\{1,9\}$, $\{1,12\}$, $\{1,15\}$, $\{2,5\}$, $\{2,8\}$,
$\{2,10\}$, $\{2,11\}$, $\{2,12\}$, $\{3,7\}$, $\{3,9\}$, $\{3,10\}$,
$\{3,11\}$, $\{3,15\}$, $\{4,5\}$, $\{4,7\}$, $\{4,10\}$, $\{4,11\}$,
$\{4,13\}$, $\{4,14\}$, $\{5,11\}$, $\{5,12\}$, $\{6,7\}$, $\{6,8\}$,
$\{6,12\}$, $\{6,13\}$, $\{6,14\}$, $\{7,8\}$, $\{7,10\}$, $\{7,13\}$,
$\{7,15\}$, $\{8,9\}$, $\{8,11\}$, $\{9,11\}$, $\{9,12\}$, $\{9,13\}$,
$\{10,12\}$, $\{10,14\}$, $\{11,14\}$, $\{11,15\}$, $\{12,13\}$, $\{12,14\}$,
$\{14,15\}\}$ different from $G_{2s}$. Whereas $S_{2s}$ is contextual, with
degree 1, this variant $S'$ is not contextual. This hypergram $S'$ and its
3-qubit labeling produced by~\Cref{alg:findAssignment} are presented
in~\Cref{figPseudoTwospread}. This two-spread labeling is a genuine three-qubit
two-spread, which lives in $W_3$ but can be found neither in $W_2$ nor in a
doily of $W_3$.
\end{example}

\begin{figure}[htb!]
  \begin{center}
\begin{tikzpicture}[every plot/.style={smooth, tension=2},
  scale=2.7,
  every node/.style={scale=0.65,circle,draw=black,fill=white,minimum size=1.4cm}
]
  \coordinate (IX) at (0,1.0);
  \coordinate (ZX) at (0,-0.80);
  \coordinate (ZI) at (0,-0.40);
  \coordinate (XX) at (-0.95,0.30);
  \coordinate (ZZ) at (0.76,-0.24);
  \coordinate (YY) at (0.38,-0.12);
  \coordinate (YZ) at (-0.58,-0.80);
  \coordinate (YI) at (0.47,0.65);
  \coordinate (IZ) at (0.23,0.32);
  \coordinate (XY) at (0.58,-0.80);
  \coordinate (XI) at (-0.47,0.65);
  \coordinate (IY) at (-0.23,0.32);
  \coordinate (YX) at (0.95,0.30);
  \coordinate (ZY) at (-0.76,-0.24);
  \coordinate (XZ) at (-0.38,-0.12);
\draw (IX) -- (XI) -- (XX);
\draw (YZ) -- (ZX) -- (XY);
\draw (XY) -- (ZZ) -- (YX);
\draw (YX) -- (YI) -- (IX);
\draw (XX) -- (ZY) -- (YZ);
\draw plot coordinates{(ZI) (ZY) (IY)};
\draw [style=double] plot coordinates{(IZ) (ZZ) (ZI)};
\draw [style=double] plot coordinates{(YY) (ZX) (XZ)};
\draw plot coordinates{(IY) (YI) (YY)};
\draw plot coordinates{(XZ) (XI) (IZ)};
\node at (ZY) {$\vqa{13}{ZXI}$};
\node at (XX) {$\vqa{3}{IXI}$};
\node at (YZ) {$\vqa{14}{ZII}$};
\node at (ZZ) {$\vqa{12}{YXY}$};
\node at (IX) {$\vqa{1}{IIX}$};
\node at (YY) {$\vqa{15}{YYZ}$};
\node at (XI) {$\vqa{2}{IXX}$};
\node at (YX) {$\vqa{11}{YYX}$};
\node at (XZ) {$\vqa{6}{XXI}$};
\node at (ZI) {$\vqa{8}{ZXZ}$};
\node at (IY) {$\vqa{5}{IIZ}$};
\node at (XY) {$\vqa{7}{IZZ}$};
\node at (YI) {$\vqa{10}{YYI}$};
\node at (IZ) {$\vqa{4}{XIX}$};
\node at (ZX) {$\vqa{9}{ZZZ}$};
\end{tikzpicture}
  \end{center}
  \caption{Illustration of the non-contextual hypergram $S' =
  (V_{2s},H_{2s},G')$ sharing the same underlying hypergraph $(V_{2s},H_{2s})$
  as the two-spread hypergram $S_{2s}$, but with a different anticommutation
  graph $G'$. The vertices are represented by the numbers from 1 to 15. The
  hyperedges are represented by the lines. The two negative lines are
  represented by the double lines. The anticommutation graph $G'$ is not shown
  here to keep the figure readable. \label{figPseudoTwospread}}
\end{figure}
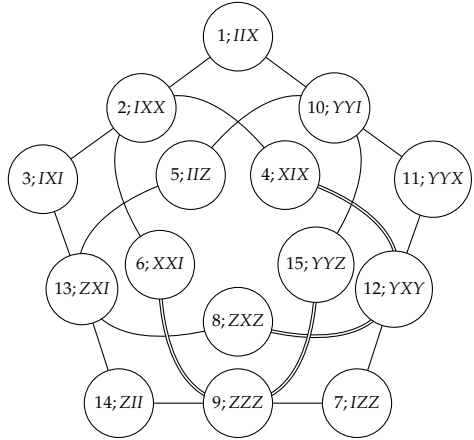

\subsection{Other quantum configurations}
\label{otherSec}

Other quantum configurations studied in former work include $k$-spaces for $k
\geq 2$, Mermin-Peres squares and quadrics. They all belong to the first family.
As for the elliptic and hyperbolic quadrics, they belong to the first family by
definition, because their lines contain all the lines passing through the
corresponding points satisfying their corresponding quadratic form.

Using an optimization algorithm, Saniga, Holweck, Kelleher and we~\cite{MSGHK24}
have found upper bounds for the degree of contextuality of all hyperbolic and
elliptic quadrics of $W_n$, for $4 \leq n \leq 7$. For $n = 4$, this upper bound
is 315 and is the same for hyperbolic and elliptic quadrics. For $n = 5$, the
bounds for elliptic and hyperbolic quadrics respectively are \np{7087} and
\np{6975}. For $n = 6$, they respectively are \np{131700} and \np{132391}. For
$n = 7$, they respectively are \np{2294580} and \np{2331191}. While there are
numerous different hyperbolic and elliptic quadrics in $W_n$, all hyperbolic
(resp. elliptic) quadrics with the same number $n$ of qubits share the same
hypergram. So, by~\Cref{mainResult}, they all have the same degree, and it is
sufficient to estimate it for one of them. This property has been exploited in
our former work~\cite{MSGHK24} to dramatically reduce the computation effort.

\section{Related work}
\label{seCcompare}

This section details in what sense our results can be considered as extensions
or improvements of results from~\cite{TLC22v2}, with an emphasis on the
algorithmic point of view.

We start by recalling some definitions from~\cite{TLC22v2}. An \emph{Eulerian
hypergraph} $\mathcal{H} = (V,H)$ is a hypergraph whose vertices are in an even
number of distinct hyperedges. A \emph{Pauli-based assignment} is defined
in~\cite{TLC22v2} as a \textbf{magic} assignment $\alpha : V \rightarrow
\mathcal{P}^{\otimes n}$ whose values are in the $n$-qubit Pauli group. A magic
assignment satifies the condition $\prod_{v\in h} \alpha(v) = -I$ for an odd
number of hyperedges $h \in H$, hereafter called the \emph{oddness condition}.

Our~\Cref{mainResult} and its proof are similar to Proposition 14
of~\cite{TLC22v2} and its proof, but it holds for all Pauli assignments, our
more general notion than Pauli-based assignments in~\cite{TLC22v2}, free of its
unnecessary oddness condition (odd number of negative hyperedges). It also holds
under one less assumption on the hypergraph, that we do not assume to be
Eulerian, i.\,e., it is not necessarily for each vertex to be in an even number
of distinct hyperedges. With three lines incident to each vertex the doily is a
significant example without this property. Pauli assignments admit two
restrictions not present in~\cite{TLC22v2}: they are injective and assign only
Pauli observables with phase 1. However these restrictions can be considered as
technical details that do not significantly weaken the results: duplicating
labels would have no interest, there are already plenty of interesting
phase-free assignments to study and all examples in~\cite{TLC22v2} and other
work only consider vertices labeled by phase-free Pauli observables.

Finally, our claim that the labeling $\alpha$ returned by
\Call{PauliAssignmentFromAnticommutations}{$V,G$} is a Pauli assignment and its
justification (in \Cref{justifSec}) are a generalization of Proposition~5
in~\cite{TLC22v2} and its proof, because here the hypergraph $(V,H)$ is again
not assumed to be Eulerian. Moreover, from the algorithmic point of view, our
polynomial~\Cref{alg:findAssignment} is more efficient than 
the two approaches mentioned in~\cite{TLC22v2}, the first one being a
search for subgraphs of the graph of the whole symplectic space, and the second
one using backtracking techniques. Both of them have an exponential complexity:
first in the number of vertices because of the nature of the algorithms used,
and second in the number of qubits because of the exponentially growing size 
of the symplectic space $W_n$.

\section{Conclusion}

The notion of assignable hypergram proposed in this paper can not only be
attached an abstract notion of contextuality degree, as detailed here, but
characterizes an exploration space where original state-independent
Kochen-Specker proofs can be looked for. Finding efficient ways to explore that
space is the main perspective. A preliminary perspective is to add criteria to
reduce the size of the space and orient the search.

The proposed framework includes the well-known magic sets, but is much wider.
Magic sets are attractive notably because their contextuality can be proved by a
simple human reasoning. However they have restrictions that we show here to be
unnecessary for the existence of Pauli assignments, such as the oddness
condition. When considering more general objects, we accept to loose the nice
property of a simple proof of contextuality and to rely on software to decide
contextuality and to compute bounds for the contextuality degree, as
in~\cite{MSGDH24}.

\section*{Acknowledgments} We thank Stefan Trandafir for noticing us the link
between our research area and~\cite{TLC22,TLC22v2}, which led us to the results
of this article. We also thank the anonymous reviewer, Frédéric
Holweck, Isabelle Jacques, Colm Kelleher, Pierre-Alain Masson and Metod Saniga
for their suggestions to improve this text.

\section*{Financial support} This work is supported by the PEPR integrated
project EPiQ ANR-22-PETQ-0007 part of Plan France 2030, by the project TACTICQ
of the EIPHI Graduate School (contract ANR-17-EURE-0002) and by the
Bourgogne-Franche-Comté Region.

\section*{Author declarations}

\paragraph*{Conﬂict of Interest.} The authors have no conﬂicts to
disclose.

\paragraph*{Author contributions.} Axel Muller: Conceptualization; Software
(main developer); Writing (50\%). Alain Giorgetti: Methodology; Software
(advice); Supervision; Writing (50\%).

\section*{Data Availability Statement} The data that support the findings of
this study are openly available in
\url{https://github.com/quantcert/quantcert.github.io/releases/tag/v1.2.0}.

\printbibliography

\end{document}